\documentclass[preprint,1p,12pt]{elsarticle}

\usepackage{xcolor,tikz}
\usetikzlibrary{arrows}
\usetikzlibrary{datavisualization}
\usetikzlibrary{shapes}
\usepackage{hyperref,soul}
\usepackage{amssymb,amsmath}
\usepackage[utf8]{inputenc}
\usepackage{graphicx,color}
\usepackage{cancel}
\usepackage{nicefrac}
\usepackage{enumitem}

\newcommand{\fcon}[2]{\langle #1,#2\rangle}
\newcommand{\ot}{\leftarrow}
\newcommand{\uimp}{\nwarrow}
\newcommand{\dimp}{\swarrow}
\newcommand{\adjoint}{\mathop{\&}\nolimits}
\newcommand{\Pp}{\text{P}}
\newcommand{\G}{\text{G}}

\newcommand{\B}{\text{B}}
\newcommand{\cM}{\mathcal{M}}
\newcommand{\cC}{\mathcal{C}}
\newcommand{\cF}{\mathcal{F}}
\newcommand{\uN}{{\uparrow_{N}}}
\newcommand{\dN}{{\downarrow^{N}}}

\newcommand{\up}{\uparrow}
\newcommand{\down}{\downarrow}

\newcommand{\oo}{\text{o}}
\newcommand{\po}{\text{p}}

\newcommand*{\bigchi}{\mbox{\scalebox{1.3}{$\chi$}}}

\usepackage{pgfplots}
\pgfplotsset{compat=newest}

\makeatletter
 \pgfplotsset{
    xtick parsed/.code={
        \c@pgf@counta 0\relax
        \foreach \x in {#1} {
            \pgfmathparse{\x}
            \ifnum\c@pgf@counta=0
                \xdef\pgfplots@xtick{\pgfmathresult}
            \else
                \xdef\pgfplots@xtick{\pgfplots@xtick,\pgfmathresult}
            \fi
            \global\advance\c@pgf@counta 1\relax
        }
    }
 } 
 
 \pgfplotsset{
    ytick parsed/.code={
        \c@pgf@counta 0\relax
        \foreach \x in {#1} {
            \pgfmathparse{\x}
            \ifnum\c@pgf@counta=0
                \xdef\pgfplots@ytick{\pgfmathresult}
            \else
                \xdef\pgfplots@ytick{\pgfplots@xtick,\pgfmathresult}
            \fi
            \global\advance\c@pgf@counta 1\relax
        }
    }
 }

 \makeatother

\journal{Computational and Applied Mathematics}

\newtheorem{theorem}{Theorem}
\newtheorem{corollary}[theorem]{Corollary}
\newtheorem{lemma}[theorem]{Lemma}
\newtheorem{proposition}[theorem]{Proposition}
\newdefinition{definition}[theorem]{Definition }
\newdefinition{remark}[theorem]{Remark }
\newdefinition{example}[theorem]{Example }
\newproof{proof}{Proof}

\makeatletter
\def\@author#1{\g@addto@macro\elsauthors{\normalsize%
    \def\baselinestretch{1}%
    \upshape\authorsep#1\unskip\textsuperscript{%
      \ifx\@fnmark\@empty\else\unskip\sep\@fnmark\let\sep=,\fi
      \ifx\@corref\@empty\else\unskip\sep\@corref\let\sep=,\fi
      }%
    \def\authorsep{\unskip,\space}%
    \global\let\@fnmark\@empty
    \global\let\@corref\@empty  
    \global\let\sep\@empty}%
    \@eadauthor={#1}
}
\makeatother

\begin{document}

\begin{frontmatter}

\title{
Decomposition of  contexts into independent subcontexts based on thresholds\tnoteref{t1}  
}
\tnotetext[t1]{Partially supported by the project  PID2022-137620NB-I00 funded by MICIU/AEI/10.13039/501100011033 and FEDER, UE, by the grant TED2021-129748B-I00 funded by MCIN/AEI/10.13039/501100011033 and European Union NextGenerationEU/PRTR, and by the project PR2023-009 funded by the University of C\'adiz.}

\author
{Roberto G. Arag\'on\corref{t2}}
\cortext[t2]{Corresponding author}
\ead{roberto.aragon@uca.es}
\author
{Jes\'us Medina}
\ead{jesus.medina@uca.es}
\author
{Elo\'isa Ram\'irez-Poussa}
\ead{eloisa.ramirez@uca.es}

\address
{Department of Mathematics,
 University of  C\'adiz. Spain\\
}

\begin{abstract}
The process of decomposing databases into smaller datasets, with the objective of extrapolating the information obtained in the smaller ones to the original database, represents a relevant and complex challenge in real applications. 
It is particularly relevant in the context of fuzzy formal concept analysis, where the complexities of knowledge extraction from datasets characterized by incomplete and imperfect data are considerable. 

This paper will analyze a mechanism and different properties for detecting independent subcontexts from a given context, using modal operators within the multi-adjoint concept lattice framework.
\end{abstract}

\begin{keyword}
Formal concept analysis, multi-adjoint framework, independent subcontext, necessity operator, decomposition
\end{keyword}

\end{frontmatter}
\section{Introduction}

Formal concept analysis~\cite{GanterW} is
a mathematical tool based on lattice theory focused on the processing of information from datasets, in which two sets (a set of objects and a set of attributes) and the relationships among them stand out. This interpretation of the dataset is called context. FCA has already been  used in different frameworks, such as in machine learning,  digital forensic~\cite{Aragon2023,OJEDAHERNANDEZ2024301797,SOKOL2023108940}, collaboration strategy~\cite{FELDE2023102104}, concept-drift research~\cite{FENZA2023122640}, medical diagnoses~\cite{biomimetics9070421,Zhai24},  social networks~\cite{pablo2015,sci:kristina19}, etc.

The different theoretical developments have contribute to offer as much flexible frameworks as possible in order to be able to handle different kinds of datasets~\cite{ANTONI2024109245,DUBOIS2025121419,KRIDLO2023119498,OJEDAHERNANDEZ2023108458}. 
For example, FCA is 
one of the most interesting tools to complement and give an high level of trustworthy to the nowadays neural networks decision support systems approaches.

It is very useful that the datasets contain imperfect, uncertainty or incomplete data. This was the main reason why Burusco and Fuentes-Gonz\'alez introduced the first fuzzy approach of FCA. Fuzzy logic was introduced by Lotfy A. Zadeh~\cite{Z1}
with this main goal, processing of inaccurate datasets. 
Different FCA fuzzy extension have been introduced, such as the B{\v e}lohl{\'a}vek's~\cite{Belohlavek99}, Pollandt's~\cite{pollandt97} and Kraj{\v c}i's~\cite{Stano:GCL} approaches. 
Later, other more flexible frameworks were introduced such as the multi-adjoint and the heterogeneous approaches~\cite{Antoni2016,krajci2013}.
This paper will consider the multi-adjoint concept lattice framework, which allows the use of a general algebraic structure, taking into account, for instance, diverse adjoint triples. 
{In fact, it allows to consider different  degrees of preference among the objects and/or attributes, see~\cite{cornejoINS650,mor-fss-cmpi} for more details. 
Moreover, this framework enables us to model effectively in the absence of properties such as commutativity and associativity, which could be required in  certain real-life scenarios.}
A  directly adaptation of the results given in this paper can be done to the heterogeneous approach.

Datasets are usually also very huge and obtaining information from them is very complex in many cases. This is the main reason why a large number of approaches are focused on the decomposition of datasets and the procurement of distributive systems.
Recent works in FCA~\cite{ARAGON2025-FSS,aragonESCIM2022,COAMfac2024} have introduced preliminary definitions and results in this fundamental goal. 
For example,  
one of the main goals of \cite{COAMfac2024} was to analyze  different properties of factorizable classical contexts, and study what of them can be extended to the fuzzy setting.

This paper is focused on the characterization of the detection and computation of the independent subcontexts appearing in a context~\cite{COAMfac2024,dubois:2012}. 
In~\cite{dubois:2012} the subcontexts were characterized by the necessity operators given in possibility theory~\cite{dubois:2011}. Specifically, the closures of subsets of objects and subsets of attributes, given by the composition of two necessity operators, provides the independent subcontexts.   However, this mechanism cannot trivially be extended to the fuzzy setting. This paper will characterize the pairs of fuzzy subsets of objects and subsets of attributes determining the independent fuzzy subcontexts of a given context.
Furthermore, different interesting properties of these operators will be analyzed, such as, each pair provides the top and bottom concepts of the corresponding independent subcontext and that no other concepts exists between them and the top and bottom concepts of the whole concept lattice of the original context.  

The structure of this paper is as follows:
Section~\ref{sec:preli} outlines the fundamental notions about the multi-adjoint framework that are essential for a more comprehensive understanding of this work. Section~\ref{sec:3} presents a detailed analysis of the closure of the necessity operator, which is employed to decompose a context into independent subcontexts. Section~\ref{sec:4} extends the properties  introduced in~\cite{COAMfac2024}, and a method based on thresholds to decompose a given context is provided in Section~\ref{sec:5}. Finally, conclusions and future perspectives are stated in Section~\ref{sec:conclu}.

\section{Preliminaries}\label{sec:preli}
This section presents the basic definitions and results needed throughout the paper. The first definition recalls the operators considered in the different algebraic structures used in this paper.

\begin{definition}\label{def:adjoint}
Let $(P_1,\leq_1)$, $(P_2,\leq_2)$, $(P_3,\leq_3)$ be posets and
$\adjoint\colon P_1\times P_2 \to P_3$,
$\dimp\colon P_3 \times P_2 \to P_1$,
$\uimp\colon P_3 \times P_1 \to P_2$ be
mappings, then $(\adjoint,\dimp,\uimp)$  is an
\emph{adjoint triple} with respect to $P_1, P_2, P_3$ if:
\begin{equation}\label{ap}
x\leq_1 z  \dimp y \quad \!\!\hbox{iff} \!\!\quad x \adjoint y \leq_3 z    \quad\!\!  \hbox{iff} \!\!\quad  y\leq_2 z \uimp  x
\end{equation}
  where $x\in P_1$, $y\in P_2$ and $z\in P_3$. Condition~\eqref{ap}  is called \emph{adjoint property}.
\end{definition}
 
In the following example, we show the adjoint triples associated with the G\"odel and product t-norms together with their residuated implications~\cite{comparativeAMIS15}, which will be employed in the forthcoming examples of this paper.  

\begin{example}\label{ex.non.commut}
Given   $m\in \mathbb N$, the set $[0,1]_m$ is   a regular partition of $[0,1]$ in $m$ pieces, that is, for example  $[0,1]_5=\{0,0.2,0.4,0.6,0.8,1\}$ divides the unit interval in five pieces.
A discretization of the G\"odel and product t-norms are the operators  $\adjoint^*_\G,\adjoint^*_{\Pp}\colon [0,1]_{5}\times [0,1]_{4}\to [0,1]_{10}$ defined, respectively,  as:
\[
x \adjoint^*_{\G} y= \frac{\textstyle  \lceil 10\cdot \min\{x,y\}\rceil}{\textstyle 10} \qquad
x \adjoint^*_{\Pp} y= \frac{\textstyle  \lceil 10\cdot x\cdot y\rceil}{\textstyle 10}
\]
for all $x\in [0,1]_{5}$ and $y\in[0,1]_{4}$, where $\lceil\,\_\,\rceil$ is the ceiling function. Moreover, their respective residuated implications
 $\swarrow^*_{\G},\swarrow^*_{\Pp}  \colon
[0,1]_{10} \times [0,1]_{4}\to [0,1]_{5}$ and 
$\nwarrow^*_{\G}, \nwarrow^*_{\Pp}  \colon [0,1]_{10} \times [0,1]_{5}\to [0,1]_{4}$ are defined as:
\begin{eqnarray*}
z\swarrow^*_{\G} y  & =  \dfrac { \lfloor5 \cdot z\ot_\G y\rfloor}{5}\qquad
z\swarrow^*_{\Pp} y  & =  \dfrac { \lfloor 5 \cdot z\ot_\Pp y\rfloor}{5}  \\
z\nwarrow^*_{\G} x  & = \dfrac { \lfloor 4 \cdot  z\ot_\G x\rfloor}{4}\qquad
z\nwarrow^*_{\Pp} x  & =  \dfrac { \lfloor 4 \cdot  z\ot_\Pp x\rfloor}{4}  \\
\end{eqnarray*}
where $\lfloor\,\_\,\rfloor$ is the floor function and the implications $ \ot_\G, \ot_\Pp\colon [0,1]\times [0,1]\to [0,1]$ are the residuated implications of the G\"odel and product t-norms, respectively, defined as:
\begin{eqnarray*}
b\ot_\G a &=
\begin{cases}
 1& \hbox{if } a\leq b\\
 b &\hbox{otherwise}
\end{cases} \qquad 
b\ot_\Pp a &=
\begin{cases}
 1& \hbox{if } a\leq b\\
 \frac{b}{a} &\hbox{otherwise}
\end{cases} 
\end{eqnarray*}
for all $a,b\in[0,1]$.  \qed
\end{example}

Note that, for example, the G\"odel and product discretization operators do not have zero-divisors. Recall that, given three lower bounded posets,  $(P_1,\leq_1,\bot_1)$, $(P_2,\leq_2,\bot_2)$, $(P_3,\leq_3,\bot_3)$,   an operator $\adjoint\colon P_1\times P_2\to P_3$ has \emph{zero-divisors}, if there exist at least two elements $x\in P_1\setminus\{\bot_1\}$ and $y\in P_2\setminus\{\bot_2\}$, such that $x\adjoint y=\bot_3$.

The following result states some properties of adjoint triples that will be used  later on.

\begin{proposition}[\cite{ija-cmr15}]\label{prop:atproperties}
Let $(\adjoint,\swarrow,\nwarrow)$ be an adjoint triple with respect to   
 three posets $(P_1,\leq_1)$, $(P_2,\leq_2)$ and $(P_3,\leq_3)$. The following properties are satisfied:
\begin{enumerate}
\item $\adjoint$ is order-preserving on both arguments.
\item $\swarrow$ and $\nwarrow$ are order-preserving on the first argument and order-reversing on the second argument.
\item $\bot_1\adjoint y=\bot_3$,  $\top_3\swarrow y=\top_1$, for all $y\in P_2$, when $(P_1,\leq_1,\bot_1,\top_1)$ and $(P_3,\leq_3,\bot_3,\top_3)$ are bounded posets.
\item $x\adjoint \bot_2=\bot_3$ and $\top_3\nwarrow x=\top_2$, for all $x\in P_1$, when $(P_2,\leq_2,\bot_2,\top_2)$ and $(P_3,\leq_3,\bot_3,\top_3)$ are bounded posets.
\item $z\nwarrow\bot_1=\top_2$ and $z\swarrow\bot_2=\top_1$, for all $z\in P_3$, when $(P_1,\leq_1,\bot_1,\top_1)$ and $(P_2,\leq_2,\bot_2,\top_2)$ are bounded posets.
\item $z\swarrow y =\max\{x\in P_1\mid x\adjoint y\leq_3 z\}$, for all $y\in P_2$ and $z\in P_3$.
\item $z\nwarrow x =\max\{y\in P_2\mid x\adjoint y\leq_3 z\}$, for all $x\in P_1$ and $z\in P_3$.
\end{enumerate}
\end{proposition}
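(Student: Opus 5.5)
All seven items follow from the adjoint property~\eqref{ap}, using reflexivity and transitivity of the orders. The plan is to prove the two characterizations (6) and (7) first and then read the remaining items off them together with the existence of bounds.

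For (6), fix $y\in P_2$ and $z\in P_3$. Taking $x=z\swarrow y$ in~\eqref{ap}, reflexivity $z\swarrow y\leq_1 z\swarrow y$ gives $(z\swarrow y)\adjoint y\leq_3 z$, so $z\swarrow y$ belongs to the set $\{x\in P_1\mid x\adjoint y\leq_3 z\}$. Conversely, any $x$ in that set satisfies $x\leq_1 z\swarrow y$, again by~\eqref{ap}. Hence $z\swarrow y$ is the maximum of the set. Item (7) is symmetric, using the second equivalence in~\eqref{ap}. The same instance of reflexivity also yields the counit inequalities $(z\swarrow y)\adjoint y\leq_3 z$ and $x\adjoint(z\nwarrow x)\leq_3 z$, which will be reused below.

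For monotonicity (1), take $x\leq_1 x'$. From $x'\adjoint y\leq_3 x'\adjoint y$ we get $x'\leq_1 (x'\adjoint y)\swarrow y$. Transitivity then gives $x\leq_1 (x'\adjoint y)\swarrow y$, and~\eqref{ap} returns $x\adjoint y\leq_3 x'\adjoint y$. The second argument is handled in the same way through $\nwarrow$.

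For (2), let $z\leq_3 z'$. By the counit inequality and transitivity, $(z\swarrow y)\adjoint y\leq_3 z'$, so $z\swarrow y\leq_1 z'\swarrow y$. For antitonicity in the second argument, let $y\leq_2 y'$. By (1) and the counit inequality, $(z\swarrow y')\adjoint y\leq_3 (z\swarrow y')\adjoint y'\leq_3 z$. By~\eqref{ap} this gives $z\swarrow y'\leq_1 z\swarrow y$. The analogous arguments handle $\nwarrow$.

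For the boundary items, $\bot_3\swarrow y\in P_1$, so $\bot_1\leq_1 \bot_3\swarrow y$, hence $\bot_1\adjoint y\leq_3\bot_3$, i.e.\ $\bot_1\adjoint y=\bot_3$. Likewise, $\top_1\adjoint y\leq_3\top_3$ trivially, so $\top_1\leq_1\top_3\swarrow y$ and equality holds. Item (4) is the mirror argument via $\nwarrow$. For (5), $\bot_1\adjoint y=\bot_3\leq_3 z$ for every $y$ by (3), so every $y$, in particular $\top_2$, satisfies $y\leq_2 z\nwarrow\bot_1$. This forces $z\nwarrow\bot_1=\top_2$, and symmetrically $z\swarrow\bot_2=\top_1$ using (4).

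There is no real obstacle here. The only care needed is to invoke each boundedness hypothesis exactly where it is assumed, and to note that in (5) the hypotheses on $P_1,P_2$ must combine with the bounds of $P_3$ used in (3) and (4). If $P_3$ is not bounded below, one instead argues directly: since $\bot_1\adjoint y\leq_3\bot_1\adjoint y$ holds and $\bot_1\leq_1 z\swarrow y$ always, we get $\bot_1\adjoint y\leq_3 z$ for all $z$.
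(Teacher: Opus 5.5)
Your proof is correct: every item follows from the adjoint property by reflexivity, transitivity and the bounds, exactly as you derive it. The paper gives no proof of its own here (it only cites \cite{ija-cmr15}), and your direct derivation is the standard argument.

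One cleanup for (5): your fallback argument is the one that matches the stated hypotheses. The fact $\bot_1\leq_1 z\swarrow y$ alone gives $\bot_1\adjoint y\leq_3 z$ for every $z\in P_3$, so the reflexivity instance you add is not needed and no bound on $P_3$ is required. Indeed, this shows $P_3$ automatically has a least element once $P_1$ has one.
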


 As a consequence, another helpful result about adjoint triples in related to the zero-divisors of an operator.
\begin{corollary}\label{cor:topno0}
    Given an adjoint triple $(\adjoint,\dimp,\uimp)$ with respect to three bounded posets $(P_1,\leq_1,\bot_1,\top_1), (P_2,\leq_2,\bot_2,\top_2)$ and $(P_3,\leq_3,\bot_3,\top_3)$, whose elements $T_1$ and $T_2$ are not zero-divisors of $\adjoint$, then it is satisfied that:
    \begin{itemize}
        \item $x\adjoint \top_2 = \bot_3$ if and only if $x = \bot_1$.
        \item $\top_1\adjoint y = \bot_3$ if and only if $y=\bot_2$.
    \end{itemize}
\end{corollary}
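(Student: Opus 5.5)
The corollary should follow from Proposition~\ref{prop:atproperties} together with the hypothesis that $\top_1$ and $\top_2$ (the $T_1$, $T_2$ of the statement) are not zero-divisors of $\adjoint$. We argue each item separately, and in each one direction is immediate while the other uses the hypothesis.

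For the first item, the backward implication is item 3 of Proposition~\ref{prop:atproperties}: $\bot_1\adjoint y=\bot_3$ for every $y\in P_2$, so in particular $\bot_1\adjoint\top_2=\bot_3$. For the forward implication we argue by contradiction. Suppose $x\adjoint\top_2=\bot_3$ with $x\neq\bot_1$. Since $\top_2\neq\bot_2$ (we assume the posets are nontrivial, which is implicit in speaking of zero-divisors), the pair $x\in P_1\setminus\{\bot_1\}$, $\top_2\in P_2\setminus\{\bot_2\}$ witnesses that $\top_2$ is a zero-divisor of $\adjoint$. This contradicts the hypothesis, so $x=\bot_1$.

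The second item is symmetric. Its backward implication is item 4 of Proposition~\ref{prop:atproperties}, namely $\top_1\adjoint\bot_2=\bot_3$. For the forward implication, if $\top_1\adjoint y=\bot_3$ with $y\neq\bot_2$, then $\top_1$ would be a zero-divisor, which is excluded.

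The only delicate step is interpretive rather than technical: the statement assumes that ``$\top_1,\top_2$ are not zero-divisors''. We read this as saying that no pair of the form $(x,\top_2)$ or $(\top_1,y)$ with nonbottom components has product $\bot_3$. If the intended reading is instead the stronger ``$\adjoint$ has no zero-divisors'', the argument is unchanged.

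A more structural alternative avoids that reading, at the cost of assuming $\adjoint$ has no zero-divisors at all. By item 1 of Proposition~\ref{prop:atproperties}, $\adjoint$ is monotone, so $x\adjoint y\leq_3 x\adjoint\top_2$ for every $y$. Hence $x\adjoint\top_2=\bot_3$ forces $x\adjoint y=\bot_3$ for all $y$, and with no zero-divisors this gives $x=\bot_1$. The direct contradiction argument above is simpler and works under the weaker reading.
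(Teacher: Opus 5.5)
Your proof is correct and follows the same route as the paper, which combines the definition of zero-divisors with Proposition~\ref{prop:atproperties}~(3). You are slightly more careful than the paper's one-line proof: you also invoke item~(4) for the backward direction of the second item, and you make explicit the implicit nontriviality assumptions $\top_1\neq\bot_1$ and $\top_2\neq\bot_2$.
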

\begin{proof}
The proof is straightforwardly obtained from the definition of zero-divisors of an operator and Proposition~\ref{prop:atproperties}~(3).\qed
\end{proof}

In this paper, we will use three algebraic structures. The first one, which is recalled next, is the algebraic structure of the fuzzy extension of formal concept analysis from the multi-adjoint philosophy.   

\begin{definition}  
A \emph{multi-adjoint frame} is a tuple $(L_1,L_2,P, \adjoint_1, \dots,\adjoint_n)$,\break
where $(L_1,\preceq_1, \bot_1, \top_1)$ and $(L_2,\preceq_2, \bot_2, \top_2)$ are complete lattices, $(P,\leq)$ is a poset and $(\adjoint_i,\dimp^i,\uimp_i)$ is an adjoint triple with respect to $L_1, L_2, P$,  for all~$i\in\{1, \dots, n\}$.
\end{definition}

From a fixed multi-adjoint frame, a context  is defined as follows.

\begin{definition}
Given a multi-adjoint  frame $(L_1,L_2,P, \adjoint_1,\dots,\adjoint_n)$, a  \emph{context} is a tuple $(A,B,R,\sigma)$ such that
 $A$ and $B$ are  non-empty
sets (usually interpreted as attributes and objects, respectively), $R$ is a $P$-fuzzy relation $R \colon A\times B
\to P$ and $\sigma\colon A\times B\to \{1,\dots,n\}$ is a
mapping  which associates any element in $A\times B$ with some specific adjoint triple of the frame. 
\end{definition}

Given a lower bounded $(P,\leq,\bot)$, a context $(A,B,R,\sigma)$ will be called \emph{normalized} if for every attribute $a\in A$ there exist $b_1,b_2\in B$ such that $R(a,b_1)\neq \bot$ and $R(a,b_2)=\bot$ and for every object $b\in B$ there exist $a_1,a_2\in B$ such that $R(a_1,b)\neq \bot$ and $R(a_2,b)=\bot$.

In addition, from a multi-adjoint  frame $(L_1,L_2,P, \adjoint_1,\dots,\adjoint_n)$ and context $(A,B,R,\sigma)$, the generalization of the derivation operators are the mappings ${}^{\uparrow} \colon L_2^B\rightarrow L_1^A$ and ${}^{\downarrow} \colon L_1^A \rightarrow L_2^B$ given as follows:
 \begin{eqnarray*}\label{conexGmulti}
	 g^{\uparrow} (a) &=&\inf \{  R(a,b)\swarrow^{\sigma(a,b)}  g(b)\mid b\in B\}\\\label{conexGmulti2}
	 f^{\downarrow} (b)  &=&\inf \{  R(a,b)\nwarrow_{\sigma(a,b)} f(a)\mid a\in A\}
\end{eqnarray*}
for all $g\in L_2^B$, $f\in L_1^A$ and $a\in A$, $b\in B$.  
Moreover, the pair $({}^\up,{}^\down)$ forms an antitone Galois connection~\cite{mor-fss-cmpi}. A {\em multi-adjoint concept} is a pair $\langle g, f\rangle$ satisfying that $g^{\uparrow} =f$ and $f^{\downarrow} =g$, and  the set of multi-adjoint concepts with the ordering defined in the following definition forms a complete lattice.

\begin{definition}[\cite{mor-fss-cmpi}]
 The  \emph{multi-adjoint concept lattice} associated with a  multi-adjoint frame $(L_1,L_2,P, \adjoint_1, \dots,\adjoint_n  )$ and a context $(A,B,R,\sigma)$ given, is the set
$$
\mathcal{M}=\{\langle g, f\rangle \mid  g\in L_2^B, f\in L_1^A
\hbox{ and } g^{\uparrow} =f, f^{\downarrow}=g\}
$$
where the
ordering is defined by $ \langle g_1, f_1\rangle\preceq \langle g_2, f_2\rangle
\hbox{ if and only if } g_1\preceq_2 g_2 $ (equivalently $f_2\preceq_1
f_1 $), for all $\langle g_1, f_1\rangle, \langle g_2, f_2\rangle\in \mathcal{M}$.
\end{definition}

The other two algebraic structures are associated with the hybrid concept lattice frameworks obtained from the merging of  formal concept analysis and rough set theory, that is, the property-oriented concept lattice and object-oriented concept lattice~\cite{ins-medina}. The following definitions are the corresponding ones in the multi-adjoint setting.

\begin{definition}\label{def:pof}
A \emph{multi-adjoint property-oriented frame} is the tuple \break$(L_1,L_2,P, \adjoint^\po_1,
\dots,\adjoint^\po_n  )$, where $(\adjoint^\po_i,\swarrow_\po^i,\nwarrow^\po_i)$ is an adjoint triple with respect to $P$, $L_2$, $L_1$ for all~$i\in\{1, \dots, n\}$.
\end{definition}

 In this frame, a context is the tuple $(A,B,R,\sigma_\po)$, where $A$, $B$ and $R$ are as in the case above, and $\sigma_\po$ associates each pair $(a,b)\in A\times B$ with a triple of the frame $(L_1,L_2,P, \adjoint^\po_1,
\dots,\adjoint^\po_n  )$. The necessity operator is given by the mapping ${}^{\dN} \colon L_1^A \rightarrow L_2^B$, defined as:
$$
f^\dN (b)  =\inf \{  f(a)\nwarrow_{\sigma_\po(a,b)} R(a,b)\mid a\in A\}
$$
for all $b\in B$ and $f\in L_1^A $.

\begin{definition}\label{def:oof}
A \emph{multi-adjoint object-oriented frame} is the tuple \break$(L_1,L_2,P, \adjoint^\oo_1,\dots,\adjoint^\oo_n  )$, where $(\adjoint^\oo_i,\swarrow_\oo^i,\nwarrow^\oo_i)$ is an adjoint triple with respect to $L_1$, $P$, $L_2$ for all~$i\in\{1, \dots, n\}$. 
\end{definition}

A context $(A,B,R,\sigma_\oo)$ is defined similarly and the necessity operator is given by the mapping ${}^{\uN} \colon L_2^B\rightarrow L_1^A$, defined as:
$$
g^{\uN} (a) =\inf \{  g(b)\swarrow^{\sigma_\oo(a,b)}  R(a,b)\mid b\in B\}
$$
for all $a\in A$ and $g\in L_2^B $. Notice that these necessity operators are the generalization of the necessity operator defined in the classical setting~\cite{chenyao08,dubois:2012,DuntschG03}. Moreover, we can see that the implications are defined in  different domains.

In addition, the fuzzy sets $g\in L_2^B $ and $f\in L_1^A $ such that $g(b) = \top_2$, for all $b\in B$, and $f(a)=\top_1$, for all $a\in A$, are denoted as $g_\top$ and $f_\top$, respectively. Similarly,  when  $g(b) = \bot_2$, for all $b\in B$, and $f(a)=\bot_1$, for all $a\in A$, we will denote them as $g_\bot$ and $f_\bot$, respectively.

This section finishes presenting  two fundamental notions in this paper. The first one fixes the preliminary properties that a subcontext must satisfy in order to obtain an independent subcontext. {Notice that these definitions extend the ones given in the classical case~\cite{dubois:2012} based on the bottom element and taking into account all the non-zero relations. This last fact will be weakened with the consideration of a threshold in Section~\ref{sec:5}.}

\begin{definition}\label{def.Lsep_subc}

Given the multi-adjoint frame $(L_1,L_2,P, \adjoint_1, \dots,\adjoint_n)$ and a context $(A,B,R,\sigma)$, a \emph{separable subcontext} is a tuple\footnote{Notice that $R_{Y\times X}$ and $\sigma_{Y\times X}$ denote the restriction of the relation $R$ and the mapping $\sigma$ to the Cartesian product $Y\times X$.} $(Y,X,R_{Y\times X},\sigma_{Y\times X})$ such that 
\begin{itemize}
    \item $Y\subset A$ and $X\subset B$ are  non-empty sets.
    \item There exist $a\in Y$ and $b\in X$ such that $R(a,b)\neq \bot$.
    \item $R(a,b')=\bot$, for all $(a,b')\in Y\times X^c$.
    \item $R(a',b)=\bot$, for all $(a',b)\in Y^c\times X$.
\end{itemize}
where ${}^c$ denotes the complement of a set.
\end{definition}

Based on the previous notion, we can say when a context can be decomposed into independent subcontexts.

\begin{definition}\label{def:decomp_indep}

A normalized context $(A,B,R,\sigma)$ has a \emph{decomposition  into independent subcontexts},  
if there exists a non-empty index set $\Lambda$ such that:
\begin{itemize}
 \item $(A_\lambda, B_\lambda, R_{A_\lambda\times B_\lambda}, \sigma_{A_\lambda\times B_\lambda})$ is a separable subcontext of  $(A,B,R,\sigma)$, for all ${\lambda\in\Lambda}$.
	\item $\bigcup_{\lambda \in \Lambda} A_\lambda = A$,  $\bigcup_{\lambda \in \Lambda} B_\lambda = B$, and  $A_\lambda\cap  A_\mu=\varnothing$, $B_\lambda\cap  B_\mu=\varnothing$, for all $\lambda,\mu\in \Lambda$ with $\lambda\neq\mu$.
        \item The mapping $\sigma$ associates conjunctors with no zero-divisor for the subsets $A_\lambda^c\times B_\lambda$ and $A_\lambda\times B_\lambda^c$ of $A\times B$, for all $\lambda\in\Lambda$.
\end{itemize} 
 Every tuple $(A_\lambda, B_\lambda, R_{A_\lambda\times B_\lambda}, \sigma_{A_\lambda\times B_\lambda})$  is called \emph{independent subcontext of the context $(A,B,R,\sigma)$.}
\end{definition}

\section{Closure of necessity operators to obtain independent subcontexts} \label{sec:3}
In this section, we study when a formal context within the multi-adjoint framework can be decomposed into independent subcontexts and how  these independent subcontexts can be determined. To this end, 
necessity operators will be fundamental, as in the classical case~\cite{COAMfac2024, dubois:2012}.

It is important to note that, 
in the fuzzy setting, the truth-value algebraic structure is determinant to define the necessity operators. Moreover, these operators belong to two related but different frameworks given by the property-oriented and the object-oriented concept lattices point of views. Specifically, although it is natural to fix the same   set of attributes, set of objects and the $P$-fuzzy relation, the operators (implications/adjoint triples) are defined from the two different frameworks aforementioned.

The following notation will be established on the applications $\sigma$ and the adjoint triples of each frame in order to facilitate the identification of the framework in which we are working. We will write $(A,B,R,\sigma)$ as a context associated with $(L_1, L_2, P,\adjoint_1,\dots, \adjoint_n)$ which is a multi-adjoint frame, $(A,B,R,\sigma_\po)$ as a context associated with $(L_1, L_2, P,\adjoint^\po_1,\dots, \adjoint^\po_m)$ which is a multi-adjoint property-oriented frame  and $(A,B,R,\sigma_\oo)$ as a context associated with $(L_1, L_2, P,\adjoint^\oo_1,\dots, \adjoint^\oo_s)$ which is a multi-adjoint object-oriented frame.

From now on, we will fix a multi-adjoint frame $(L_1, L_2, P,\adjoint_1,\dots, \adjoint_n)$, a multi-adjoint property-oriented frame $(L_1, L_2, P,\adjoint^\po_1,\dots, \adjoint^\po_m)$ and a multi-adjoint object-oriented frame $(L_1, L_2, P,\adjoint^\oo_1,\dots, \adjoint^\oo_s)$, where $(P,\leq, \bot,\top)$ is a bounded poset, and the conjunctors $\adjoint_i$, $\adjoint^\po_j$ and $\adjoint^\oo_k$ have no zero-divisors, for all $i\in \{1,\dots, n\}$, $j\in \{1,\dots, m\}$ and $k\in \{1,\dots, s\}$. Additionally, we will only mention the context associated with the multi-joint framework, although we can use the mappings $\sigma_\po$ and $\sigma_\oo$ when applying the necessity operators.

The following result is a technical result that will be useful to prove further results of this section.
\begin{lemma}\label{lem:botrelation}
	Let $(A,B,R,\sigma)$ be a context,  and $(g,f)$ a pair of fuzzy subsets and $(a,b)\in A\times B$.
	\begin{itemize}
		\item If $g = f^{\dN}$,  $f(a) = \bot_1$ and $g(b) = \top_2$, then $R(a,b) = \bot$.
		\item If $f = g^{\uN}$,  $f(a) = \top_1$ and $g(b) = \bot_2$, then $R(a,b) = \bot$.
	\end{itemize}
\end{lemma}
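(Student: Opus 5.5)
We argue each item directly from the definition of the corresponding necessity operator, unfolding the infimum and then applying the adjoint property together with the absence of zero-divisors. The two items are symmetric, so we do the first in detail and indicate the changes for the second.

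For the first item, assume $g=f^{\dN}$, $f(a)=\bot_1$ and $g(b)=\top_2$. By the definition of ${}^{\dN}$ in the property-oriented frame,
\[
\top_2 = g(b)= f^{\dN}(b)=\inf\{f(a')\nwarrow^\po_{\sigma_\po(a',b)} R(a',b)\mid a'\in A\}\preceq_2 f(a)\nwarrow^\po_{i} R(a,b),
\]
where $i=\sigma_\po(a,b)$. Hence $f(a)\nwarrow^\po_i R(a,b)=\top_2$, that is, $\top_2\preceq_2 \bot_1\nwarrow^\po_i R(a,b)$. The triple $(\adjoint^\po_i,\swarrow_\po^i,\nwarrow^\po_i)$ is an adjoint triple with respect to $P$, $L_2$, $L_1$. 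Applying the adjoint property \eqref{ap} with $x=R(a,b)\in P$, $y=\top_2$ and $z=\bot_1$, this is equivalent to $R(a,b)\adjoint^\po_i \top_2\preceq_1 \bot_1$, so $R(a,b)\adjoint^\po_i\top_2=\bot_1$. Since $\adjoint^\po_i$ has no zero-divisors and $\top_2\neq\bot_2$, Corollary~\ref{cor:topno0} (first bullet, with $P$ in the role of $P_1$) gives $R(a,b)=\bot$.

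The second item is analogous. We use $g^{\uN}$, the object-oriented triple $(\adjoint^\oo_k,\swarrow_\oo^k,\nwarrow^\oo_k)$ with respect to $L_1$, $P$, $L_2$, and $k=\sigma_\oo(a,b)$. From $\top_1=f(a)\preceq_1 g(b)\swarrow^{k}_\oo R(a,b)=\bot_2\swarrow^{k}_\oo R(a,b)$, the adjoint property yields $\top_1\adjoint^\oo_k R(a,b)\preceq_2\bot_2$. The second bullet of Corollary~\ref{cor:topno0} then gives $R(a,b)=\bot$.

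The only delicate point is matching the argument order of each implication to the domains of the frame, so that the adjoint property is applied in the correct slot. We also need the tacit nontriviality $\top\neq\bot$ of the lattices, which is required to invoke the no-zero-divisor hypothesis.
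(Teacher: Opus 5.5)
Your proof is correct and follows the paper's argument: you unfold the infimum to get $\bot_1\nwarrow R(a,b)=\top_2$ (resp.\ $\bot_2\swarrow R(a,b)=\top_1$), turn this into $R(a,b)\adjoint\top_2=\bot_1$ (resp.\ $\top_1\adjoint R(a,b)=\bot_2$) via the adjoint property, and conclude with Corollary~\ref{cor:topno0}. The paper gets the same step from the equivalent max-characterization in Proposition~\ref{prop:atproperties}~(7); your explicit matching of the domains $P,L_2,L_1$ and your remark that $\top\neq\bot$ is tacitly needed are precisions the paper leaves implicit.
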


\begin{proof}
	We consider $(a,b)\in A\times B$ such that $g = f^{\dN}$, $f(a) = \bot_1$ and $g(b) = \top_2$. In this case, we have that \[\top_2 = g(b) = f^{\dN}(b) = \inf\{f(a')\uimp_{\sigma_\po} R(a',b) \mid a'\in A\}\]
	Therefore, $f(a')\uimp_{\sigma_\po} R(a',b)= \top_2$, for all $a'\in A$. In particular,  this equality holds for $a$, that is, $f(a)\uimp_{\sigma_\po} R(a,b)= \top_2$. Hence, since $f(a) = \bot_1$, we have that $\bot_1\uimp_{\sigma_\po} R(a,b)= \top_2$ and, since $\top_2$ is not a zero-divisor by Proposition~\ref{prop:atproperties}~(7) and Corollary~\ref{cor:topno0}, we conclude that $R(a,b)=\bot$.
	
	The second condition arises analogously.\qed
\end{proof}

We will  make use of a particular Boolean context which is defined from the $P$-fuzzy relation of the context.
\begin{definition}
Given a context $(A,B,R,\sigma)$ associated with a multi-adjoint frame $(L_1, L_2, P,\adjoint_1,\dots, \adjoint_n)$, where $(P,\leq, \bot,\top)$ is a bounded poset,  the  Boolean relation $R^\B\colon A\times B\to \{0,1\}$ can be defined as follows:
\[R^{\B}(a,b) = \left\{\begin{array}{ll}
	1 & \mbox{ if }   R(a,b)\neq \bot \\
	0 & \mbox{ otherwise}  
\end{array}\right.\]
The context $(A,B,R^\B)$ is called \emph{associated Boolean context of the context $(A,B,R,\sigma)$}.
\end{definition}

From now on, we will consider a normalized context $(A,B,R,\sigma)$ and its associated Boolean context $(A, B, R^\B)$ throughout the document.
The following result is also a technical result that will play a key role to determine independent subcontexts of a formal context. We will abuse the notation and we will use the same symbol to denote  the necessity operator on crisp sets as on fuzzy subsets.

\begin{lemma}\label{lem:necessity}
Given $X\subseteq B$ and $Y\subseteq A$, the following equalities hold:
    $$
    \bigchi_X^\uN = \bigchi_{X^\uN} \quad \mbox{and}\quad \bigchi_Y^\dN = \bigchi_{Y^\dN}
    $$
where $\bigchi_X\colon B\to \{\bot_{2},\top_2\}$  and $\bigchi_Y\colon A\to \{\bot_{1},\top_1\}$ are the characteristic functions of the sets $X$ and $Y$, respectively.
\end{lemma}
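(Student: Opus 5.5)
The plan is to compute both sides pointwise and show that they agree. Each value of $\bigchi_X^\uN$ will turn out to be either $\top_1$ or $\bot_1$, and it will equal $\top_1$ exactly on the crisp necessity set. Here $X^\uN$ and $Y^\dN$ are read in the associated Boolean context $(A,B,R^\B)$:
\[
X^\uN=\{a\in A\mid R(a,b)\neq\bot \text{ implies } b\in X, \text{ for all } b\in B\},
\]
\[
Y^\dN=\{b\in B\mid R(a,b)\neq\bot \text{ implies } a\in Y, \text{ for all } a\in A\}.
\]

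For the first equality, fix $a\in A$ and write
\[
\bigchi_X^\uN(a)=\inf\{\bigchi_X(b)\swarrow^{\sigma_\oo(a,b)}R(a,b)\mid b\in B\},
\]
where each triple $(\adjoint^\oo_k,\swarrow_\oo^k,\nwarrow^\oo_k)$ is taken with respect to $L_1,P,L_2$. I will evaluate each term of the infimum by cases.

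\begin{itemize}
\item If $b\in X$, the term is $\top_2\swarrow R(a,b)=\top_1$ by Proposition~\ref{prop:atproperties}~(3).
\item If $b\notin X$ and $R(a,b)=\bot$, the term is $\bot_2\swarrow\bot=\top_1$ by Proposition~\ref{prop:atproperties}~(5).
\item If $b\notin X$ and $R(a,b)\neq\bot$, Proposition~\ref{prop:atproperties}~(6) gives $\bot_2\swarrow R(a,b)=\max\{x\in L_1\mid x\adjoint^\oo R(a,b)\preceq_2\bot_2\}$. Since $\adjoint^\oo$ has no zero-divisors, $x\adjoint^\oo R(a,b)=\bot_2$ forces $x=\bot_1$, so the term is $\bot_1$.
\end{itemize}

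Hence the infimum is $\top_1$ when no $b\notin X$ has $R(a,b)\neq\bot$, which means $a\in X^\uN$. Otherwise some term is $\bot_1$ and the infimum is $\bot_1$. This is exactly $\bigchi_{X^\uN}(a)$.

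The second equality is handled symmetrically. For each $b\in B$,
\[
\bigchi_Y^\dN(b)=\inf\{\bigchi_Y(a)\nwarrow_{\sigma_\po(a,b)}R(a,b)\mid a\in A\},
\]
with triples taken with respect to $P,L_2,L_1$. The same three cases apply. Proposition~\ref{prop:atproperties}~(4) gives $\top_1\nwarrow R(a,b)=\top_2$. Item~(5) gives $\bot_1\nwarrow\bot=\top_2$. Finally, item~(7) combined with the absence of zero-divisors of $\adjoint^\po$ gives $\bot_1\nwarrow R(a,b)=\bot_2$ whenever $R(a,b)\neq\bot$.

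The only delicate step is the third case. There one has to make sure the poset roles in the object- and property-oriented triples are matched correctly with the items of Proposition~\ref{prop:atproperties}, and that the no-zero-divisor hypothesis fixed at the start of this section is applied to the right arguments. The other cases are immediate from the boundary properties.
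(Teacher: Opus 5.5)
Your proof is correct and follows essentially the same route as the paper. Both compute the infimum pointwise: items (3) and (5) of Proposition~\ref{prop:atproperties} give the terms equal to $\top_1$, and the absence of zero-divisors gives $\bot_2\swarrow R(a,b)=\bot_1$ whenever $b\notin X$ and $R(a,b)\neq\bot$. The differences are only organizational. You classify each term of the infimum by cases on $b$, whereas the paper splits on whether $a\in X^\uN$. You also write out the property-oriented half explicitly with the correctly matched items (4), (5) and (7), which the paper leaves as ``analogous''.
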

\begin{proof}
Let us consider any attribute $a\in A$ and a subset of objects $X\subseteq B$. By the definition of the necessity operator we have that
$$
X^\uN = \{ a\in A \mid \mbox{for each }b\in B, \mbox{ if } R^\B(a,b)=1, \mbox{ then } b\in X\}
$$
Now, we can distinguish two cases:
\begin{itemize}
    \item If $a\not\in X^\uN$, then $\bigchi_{X^\uN}(a) = \bot_1$ and there is  $b'\in B$ such that $R^\B(a,b') = 1$ and $b'\not\in X$. In particular, we have that $R(a,b')\neq \bot$ and the implication $\bot_2\dimp^{\sigma_\oo} R(a,b') =\bot_1$ since $\adjoint^\oo_k$ has no zero-divisors, for all $k\in\{1,\dots,s\}$. 
    Therefore, we have that 
    \begin{align*}
        \bigchi_X^\uN(a) &= \inf\{\bigchi_X(b)\dimp^{\sigma_\oo} R(a,b)\mid b\in B\}\\
        &\preceq_1  \bigchi_X(b')\dimp^{\sigma_\oo} R(a,b')\\
        &= \bot_2\dimp^{\sigma_\oo} R(a,b')\\
        &= \bot_1
    \end{align*}
for all $a\not\in X^\uN$. Thus, the equality $\bigchi_X^\uN(a) = \bigchi_{X^\uN} (a)$ holds, for all $a\not\in X^\uN$.
\item If $a\in X^\uN$, then we have that $\bigchi_{X^\uN}(a) = \top_1$ and, for all $b\in B$ we have either $R^\B(a,b) = 0$ or, $R^\B(a,b) = 1$ and $b\in X$. Now, to compute
$\bigchi_X^\uN(a) = \inf\{\bigchi_X(b)\dimp^{\sigma_\oo} R(a,b)\mid b\in B\}$ we have that
\begin{itemize}
    \item if $R^\B(a,b') = 0$, then $R(a,b') = \bot$, for some $b'\in B$. Therefore, by Proposition~\ref{prop:atproperties}~(5), $\bigchi_X(b)\dimp^{\sigma_\oo} R(a,b') = \bigchi_X(b)\dimp^{\sigma_\oo} \bot = \top_1$.

    \item if $R^\B(a,b') = 1$ and $b'\in X$, 
    we have by Proposition~\ref{prop:atproperties}~(3) that
    $\bigchi_X(b')\dimp^{\sigma_\oo} R(a,b') = \top_2 \dimp^{\sigma_\oo} R(a,b') = \top_1$. 
\end{itemize}
Therefore, we have that $\bigchi_X^\uN(a) =\top_1$, for all $a\in X^\uN$ and, as a consequence, the equality $\bigchi_X^\uN(a) = \bigchi_{X^\uN} (a)$ holds, for all $a\in X^\uN$.
\end{itemize}

Thus, from the previous developments we obtain that $\bigchi_X^\uN = \bigchi_{X^\uN}$. The proof of the equality $\bigchi_Y^\dN = \bigchi_{Y^\dN}$ follows analogously. \qed
\end{proof}

Furthermore, it is necessary to recall that from a context $(A,B,R,\sigma)$ and its associated Boolean context $(A,B,R^\B)$, we can consider the following two sets, which were defined in~\cite{COAMfac2024}.
\begin{align*}
    \cC_N &= \{(X,Y)\mid X\subseteq A, Y\subseteq B \mbox{ and } X^\uN = Y, X = Y^\dN\}\\
    \cF_N &= \{(g,f)\mid g\in L_2^B,   f\in L_1^A \mbox{ and } g^\uN = f, g = f^\dN\}
\end{align*}

\begin{remark}\label{re:dp}
Dubois and Prade shown  in~\cite{dubois:2012} that each pair  belonging to $\mathcal C_N$ determines an independent subcontext   of the original context. Specifically, given a    context $(A,B,R)$ and a  pair $(X, Y )\in\mathcal C_N$, we have that  $(Y,X, R_{Y\times X})$  is  an independent subcontext  of $(A,B,R)$.
\end{remark}
The following example will serve to illustrate the aforementioned sets within a given context.

\begin{example}\label{ex:FN_CN}
 Let us consider the multi-adjoint frame $(L_1,L_2,P,\adjoint^*_\G,\adjoint^*_\Pp)$, the property-oriented multi-adjoint frame $(L_1,L_2,P,\adjoint^*_\G)$ and the object-oriented multi-adjoint frame $(L_1,L_2,P,\adjoint^*_\Pp)$, where $L_1 = [0,1]_5, L_2 = [0,1]_4$ and $P = [0,1]_{10}$ represent partitions of the unit interval in 5, 4 and 10 pieces, respectively. Moreover, $\adjoint_\G^*$ and $\adjoint_\Pp^*$ are the discretization of the G\"odel and product t-norms, respectively~\cite{ija-cmr15,caepia-bires}. Now, we consider the context $(A,B,R,\sigma)$ where $A=\{a_1,a_2,a_3,a_4\}$, $B = \{b_1,b_2,b_3,b_4\}$ and the fuzzy relation and the mapping $\sigma$ are given in Table~\ref{tab:contexto0}.

    \begin{table}[ht]
        \centering
        \begin{minipage}{0.45\textwidth}
            \begin{tabular}{|c|c c c c|}
            \hline
             $R$ & $b_1$ & $b_2$ & $b_3$ & $b_4$  \\
             \hline
             $a_1$ & 1.0 & 0.0 & 0.6 & 0.0 \\
             $a_2$ & 0.7 & 0.0 & 0.8 & 0.0 \\
             $a_3$ & 0.0 & 0.0 & 0.0 & 0.3 \\
             $a_4$ & 0.0 & 0.5 & 0.0 & 0.0 \\
             \hline
        \end{tabular}
        \end{minipage}
        \begin{minipage}{0.45\textwidth}
            \begin{tabular}{|c|c c c c|}
                     \hline
                     $\sigma$ & $b_1$ & $b_2$ & $b_3$ & $b_4$  \\
                     \hline
                     $a_1$ & $\adjoint^*_\G$ & $\adjoint^*_\Pp$ & $\adjoint^*_\G$ & $\adjoint^*_\Pp$ \\
                     $a_2$ & $\adjoint^*_\Pp$ & $\adjoint^*_\G$ & $\adjoint^*_\Pp$ & $\adjoint^*_\G$ \\
                     $a_3$ & $\adjoint^*_\G$ & $\adjoint^*_\Pp$ & $\adjoint^*_\G$ & $\adjoint^*_\Pp$ \\
                     $a_4$ & $\adjoint^*_\Pp$ & $\adjoint^*_\G$ & $\adjoint^*_\Pp$ & $\adjoint^*_\G$ \\
                     \hline
                \end{tabular}
        \end{minipage}
                
        \caption{Fuzzy relation $R$ and the mapping $\sigma$ of the context $(A,B,R,\sigma)$ of Example~\ref{ex:FN_CN}.}
        \label{tab:contexto0}
    \end{table}
In addition, the mappings $\sigma_p$ and $\sigma_o$ are constant since each frame only has one triple. Now, we can provided the set $\cF_N$ whose elements are listed below
\begin{align*}
(g_1,f_1) = & ( \{\} ,  \{\} )\\
(g_2,f_2) = & ( \{b_4/1.0\}, \{a_3/1.0\}  )\\
(g_3,f_3) = & ( \{b_2/0.25\}, \{a_4/0.4\} )\\
(g_4,f_4) = & ( \{b_2/0.25, b_4/1.0\} , \{a_3/1.0, a_4/0.4\})\\
(g_5,f_5) = & (\{b_2/1.0\} ,  \{a_4/1.0\})\\
(g_6,f_6) = & ( \{b_2/1.0, b_4/1.0\}, \{a_3/1.0, a_4/1.0\} )\\
(g_7,f_7) = & ( \{b_1/1.0, b_3/1.0\}, \{a_1/1.0, a_2/1.0\} )\\
(g_8,f_8) = & ( \{b_1/1.0, b_3/1.0, b_4/1.0\},  \{a_1/1.0, a_2/1.0, a_3/1.0\})\\
(g_9,f_9) = & (\{b_1/1.0, b_2/0.25, b_3/1.0\}, \{a_1/1.0, a_2/1.0, a_4/0.4\})\\
(g_{10},f_{10}) = & (\{b_1/1.0, b_2/0.25, b_3/1.0, b_4/1.0\}, \{a_1/1.0, a_2/1.0, a_3/1.0, a_4/0.4\} )\\
(g_{11},f_{11}) = & (\{b_1/1.0, b_2/1.0, b_3/1.0\}, \{a_1/1.0, a_2/1.0, a_4/1.0\})\\
(g_{12},f_{12}) = & (\{b_1/1.0, b_2/1.0, b_3/1.0, b_4/1.0\}, \{a_1/1.0, a_2/1.0, a_3/1.0, a_4/1.0\} )
\end{align*}

It should be noted that the attributes and objects with a value of $0.0$ are omitted in order to facilitate a more comprehensive understanding of the concepts when writing fuzzy sets.
Furthermore, we can obtain the associated Boolean context of $(A,B,R,\sigma)$, that is, $(A,B,R^\B)$ which is depicted in Table~\ref{tab:contexto0_B}.
    \begin{table}[ht]
        \centering
        \begin{tabular}{|c|c c c c|}
        \hline
         $R^\B$ & $b_1$ & $b_2$ & $b_3$ & $b_4$ \\
         \hline
         $a_1$ & 1 & 0 & 1 & 0 \\
         $a_2$ & 1 & 0 & 1 & 0 \\
         $a_3$ & 0 & 0 & 0 & 1 \\
         $a_4$ & 0 & 1 & 0 & 0 \\
         \hline
        \end{tabular}
        \caption{Boolean relation $R^B$ of the associated Boolean context of the context $(A,B,R,\sigma)$ in Example~\ref{ex:FN_CN}.}
        \label{tab:contexto0_B}
    \end{table}
From this associated Boolean context, we obtain the elements of the set $\cC_N$ which are listed below

\begin{minipage}{0.45\textwidth}
    \begin{align*}
    (X_1,Y_1) = & ( \varnothing ,  \varnothing )\\
    (X_2,Y_2) = & ( \{a_3\} ,  \{b_4\} )\\
    (X_3,Y_3) = & ( \{a_4\} ,  \{b_2\} )\\
    (X_4,Y_4) = & ( \{a_3, a_4\} ,  \{b_2, b_4\} )
    \end{align*}
\end{minipage}
\begin{minipage}{0.45\textwidth}
    \begin{align*}
    (X_5,Y_5) = & ( \{a_1,a_2\} ,  \{b_1,b_3\} )\\
    (X_6,Y_6) = & ( \{a_1,a_2,a_3\} ,  \{b_1,b_3,b_4\} )\\
    (X_7,Y_7) = & ( \{a_1, a_2, a_4\} ,  \{b_1, b_2, b_3\} )\\
    (X_8,Y_8) = & ( A ,  B )
    \end{align*}
\end{minipage}

\qed
\end{example}

The following result shows the existence of a closely relationship between the pairs in $\mathcal{F}_N$ obtained from a context $(A,B,R,\sigma)$, and the pairs in $\mathcal{C}_N$ obtained from its associated Boolean context $(A, B, R^{\B})$.

\begin{theorem}\label{th:fncn}
Given $X\subseteq B$ and $Y\subseteq A$, we have that $(\bigchi_X,\bigchi_Y)\in \cF_N$ if and only if  $(X,Y)\in\cC_N$.
\end{theorem}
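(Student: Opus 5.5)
The plan is to reduce the statement entirely to Lemma~\ref{lem:necessity}, together with the injectivity of the assignment of characteristic functions to sets. A crisp set is determined by its characteristic function: for $Z,Z'\subseteq B$ we have $\bigchi_Z=\bigchi_{Z'}$ if and only if $Z=Z'$, since $\bot_2\neq\top_2$, and similarly on $A$ with $\bot_1\neq\top_1$. I will fix notation so that the roles match those in $\cF_N$. The pair $(\bigchi_X,\bigchi_Y)$ has first component in $L_2^B$ and second in $L_1^A$, so $X\subseteq B$ and $Y\subseteq A$. Membership in $\cF_N$ then means $\bigchi_X^\uN=\bigchi_Y$ and $\bigchi_X=\bigchi_Y^\dN$. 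Membership in $\cC_N$ is read with the same roles, namely $X^\uN=Y$ and $X=Y^\dN$, both computed in the associated Boolean context $(A,B,R^\B)$, as in the proof of Lemma~\ref{lem:necessity}.

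For the direct implication, assume $(\bigchi_X,\bigchi_Y)\in\cF_N$. Lemma~\ref{lem:necessity} gives $\bigchi_{X^\uN}=\bigchi_X^\uN=\bigchi_Y$, so injectivity yields $X^\uN=Y$. Likewise $\bigchi_{Y^\dN}=\bigchi_Y^\dN=\bigchi_X$, hence $Y^\dN=X$, and therefore $(X,Y)\in\cC_N$.

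For the converse, assume $X^\uN=Y$ and $Y^\dN=X$. Applying Lemma~\ref{lem:necessity} again gives $\bigchi_X^\uN=\bigchi_{X^\uN}=\bigchi_Y$ and $\bigchi_Y^\dN=\bigchi_{Y^\dN}=\bigchi_X$, which is exactly the condition for $(\bigchi_X,\bigchi_Y)\in\cF_N$.

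The substantive work, namely that the fuzzy necessity operators send characteristic functions to characteristic functions, is already done in Lemma~\ref{lem:necessity}. That lemma uses the absence of zero-divisors and Proposition~\ref{prop:atproperties}. The only remaining point is the dual equality $\bigchi_Y^\dN=\bigchi_{Y^\dN}$, which the lemma states but justifies only as "analogous". If it needs checking, I would repeat the two-case argument with $\uimp_{\sigma_\po}$: when $a\in Y$ and $R(a,b)\neq\bot$, one gets $\top_1\uimp R(a,b)=\top_2$; when $R(a,b)=\bot$, one gets $\bot_1\uimp\bot$, which is not of the same form as Proposition~\ref{prop:atproperties}~(5), since in the property-oriented triple $R$ sits in the middle poset. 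In that case one uses the adjoint property directly, through $\bot\adjoint^\po y\le \bot_1$ and Proposition~\ref{prop:atproperties}~(3). Finally, when $a\notin Y$ and $R(a,b)\neq\bot$, the value $\bot_1\uimp R(a,b)$ must be $\bot_2$ by no zero-divisors. This verification of the dual case is the step I regard as the main obstacle. The rest is formal.
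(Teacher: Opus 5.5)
Your proof is correct and follows the paper's argument: both directions come directly from Lemma~\ref{lem:necessity}, together with the fact that a crisp set is determined by its characteristic function. One small correction to your side check of the dual case: in the property-oriented triple (with respect to $P$, $L_2$, $L_1$) the relation value lies in the \emph{first} poset, not the middle one, so $\bot_1\uimp\bot=\top_2$ is exactly Proposition~\ref{prop:atproperties}~(5); your direct derivation via the adjoint property and item~(3) is also valid.
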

\begin{proof}
    Let us suppose that $(\bigchi_X,\bigchi_Y)\in \cF_N$. Hence,  it is satisfied that   $\bigchi_X^{\uN} = \bigchi_Y$. By Lemma~\ref{lem:necessity}, we have that $\bigchi_Y =\bigchi_X^{\uN}= \bigchi_{X^{\uN}}$, and therefore, we can conclude that $X^{\uN} = Y$. Analogously, from $\bigchi_X = \bigchi_Y^{\dN}$ we obtain that $Y^{\dN} = X$. Thus, $(X,Y)\in \cC_N$.
    
    Let us prove the other implication. We consider any pair $(X,Y)\in \mathcal{C}_N$. Therefore, we have that $X^\uN = Y$ and, by Lemma~\ref{lem:necessity}, we obtain that $\bigchi_Y = \bigchi_{X^\uN} = \bigchi_X^\uN$. Moreover, from $X = Y^\dN$ we obtain that $\bigchi_X = \bigchi_{Y^\dN} = \bigchi_Y^\dN$. Consequently, $(\bigchi_X,\bigchi_Y)\in \cF_N$.\qed
\end{proof}

We illustrate this result in the next example as a continuation of Example~\ref{ex:FN_CN}.

\begin{example}\label{ex:FN_CN_th}

Coming back to Example~\ref{ex:FN_CN}, it is clear the relationship between the elements of the sets $\cF_N$ and $\cC_N$. There is an injective correspondence between the set $\cC_N$ with the set $\cF_N$, that is, 

\begin{minipage}{0.45\textwidth}
    \begin{align*}
    (\bigchi_{X_1},\bigchi_{Y_1}) = & ( g_1 , f_1 )\\
    (\bigchi_{X_2},\bigchi_{Y_2}) = & ( g_2 ,  f_2 )\\
    (\bigchi_{X_3},\bigchi_{Y_3}) = & ( g_5 ,  f_5 )\\
    (\bigchi_{X_4},\bigchi_{Y_4}) = & ( g_6 ,  f_6 )
    \end{align*}
\end{minipage}
\begin{minipage}{0.45\textwidth}
    \begin{align*}
    (\bigchi_{X_5},\bigchi_{Y_5}) = & ( g_7, f_7 )\\
    (\bigchi_{X_6},\bigchi_{Y_6}) = & ( g_8 ,  f_8 )\\
    (\bigchi_{X_7},\bigchi_{Y_7}) = & ( g_{11} ,  f_{11} )\\
    (\bigchi_{X_8},\bigchi_{Y_8}) = & ( g_{12} , f_{12} )
    \end{align*}
\end{minipage}

\qed
\end{example}

The aforementioned theorem has direct implications for the pairs $(g_\bot,f_\bot)$ and $(g_\top, f_\top)$ as we state below.
\begin{corollary}\label{cor:bottop}
    Given the set $\cF_N$, it is satisfied that $(g_\bot, f_\bot), (g_\top,f_\top)\in \mathcal F_N $.
\end{corollary}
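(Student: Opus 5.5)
The plan is to derive the statement directly from Theorem~\ref{th:fncn}, observing that $g_\bot=\bigchi_\varnothing$, $f_\bot=\bigchi_\varnothing$, $g_\top=\bigchi_B$ and $f_\top=\bigchi_A$. (Here $\bigchi_\varnothing$ is meant as the characteristic function of the empty subset of $B$ in the first case and of $A$ in the second.) It therefore suffices to check that $(\varnothing,\varnothing)$ and $(B,A)$ belong to $\cC_N$ for the associated Boolean context $(A,B,R^\B)$. We read the pair in the orientation used in Theorem~\ref{th:fncn}, with the object set first and the attribute set second.

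For the top pair, we compute $B^\uN=\{a\in A\mid \text{for each } b\in B,\ R^\B(a,b)=1 \text{ implies } b\in B\}$. The condition is vacuous, so $B^\uN=A$, and symmetrically $A^\dN=B$. Hence $(B,A)\in\cC_N$.

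For the bottom pair, we need $\varnothing^\uN=\varnothing$, which means every attribute $a$ has some $b$ with $R^\B(a,b)=1$, that is, $R(a,b)\neq\bot$. This is exactly where the standing assumption that the context is \emph{normalized} enters. Every attribute is related to some object with a non-bottom value, so no attribute lies in $\varnothing^\uN$. Likewise every object has a related attribute, so $\varnothing^\dN=\varnothing$. Hence $(\varnothing,\varnothing)\in\cC_N$.

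Applying Theorem~\ref{th:fncn} to both pairs yields $(g_\bot,f_\bot),(g_\top,f_\top)\in\cF_N$. As a sanity check, one could instead verify the claim directly from the fuzzy definitions. For the top pair, Proposition~\ref{prop:atproperties}(3) gives $\top_2\dimp R(a,b)=\top_1$ for all $a,b$, and the symmetric property gives $\top_1\uimp R(a,b)=\top_2$. For the bottom pair, $\bot_2\dimp R(a,b')=\bot_1$ whenever $R(a,b')\neq\bot$, by the absence of zero-divisors, as in the proof of Lemma~\ref{lem:necessity}. The only delicate point is making sure normalization is invoked for the bottom pair; without it an attribute with an all-$\bot$ row would receive value $\top_1$. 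This is the step I expect to need the most care, though it is short.
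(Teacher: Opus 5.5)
Your proposal is correct and takes the paper's route: the paper's proof likewise invokes Theorem~\ref{th:fncn} after noting that $(\varnothing,\varnothing)$ and $(B,A)$ belong to $\cC_N$. You also spell out what the paper leaves implicit, namely that $(B,A)\in\cC_N$ holds vacuously, while $(\varnothing,\varnothing)\in\cC_N$ depends on the standing normalization assumption that no row or column is entirely $\bot$.
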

\begin{proof}
    The proof straightforwardly holds from Theorem~\ref{th:fncn} given that the pairs $(\varnothing,\varnothing)$ and $(B,A)$ are elements of $\cC_N$. \qed
\end{proof}

On the one hand, from Theorem~\ref{th:fncn}, we can deduce that  when the associated Boolean context contains independent subcontexts, the cardinality of the set $\mathcal{F}_N$ will be greater than two. 
On the other hand, a pair $(g_i,f_i)$ of $\mathcal{F}_N$ that satisfies the conditions of Theorem~\ref{th:fncn}, with the exception of the pairs outlined in Corollary~\ref{cor:bottop}, determines disjoint partitions of  the set of objects and the set of attributes.  We can decompose the set of objects as $B=B_i^\bot\cup B_i^\top$, where $b\in B_i^\top$ if $g_i(b)=\top_2$  and  $b\in B_i^\bot$ if $g_i(b)=\bot_2$. Analogously, considering $f_i$, the set of attributes can be decomposed as $A= A_i^\bot\cup A_i^\top$, where the sets $A_i^\bot$ and $ A_i^\top$ are defined analogously to $B_i^\bot$ and $ B_i^\top$, respectively.
 
Hereafter, the pairs  of $\mathcal{F}_N$  satisfying the conditions of Theorem~\ref{th:fncn}, with the exception of those specified in Corollary~\ref{cor:bottop}, will be denoted by $\cF\cC$, that is, $\cF\cC\subseteq \cF_N$ is the set defined as:
$$
\cF\cC = \{(\bigchi_X,\bigchi_Y)\in\cF_N\mid \varnothing \neq X\subset B, \varnothing\neq Y\subset A \}
$$

Notice that, by Theorem~\ref{th:fncn}, the pairs $(X, Y)$ obtained from the pairs $(\bigchi_X,\bigchi_Y)\in\cF\cC$ belong to $\cC_N$. In addition, 
from the pairs of $\cF\cC$ disjoint partitions of the set of objects and the set of attributes can be defined, as the following result shows.

\begin{lemma}[\cite{COAMfac2024}]\label{lemma:complement}
Given a context $(A,B,R)$ and a pair $(X,Y)\in \mathcal{C}_N$, the complement of the pair $(X,Y)$ also belongs to $\mathcal{C}_N$, that is, $(X, Y)^c=(X^{c}, Y^{c})\in \mathcal{C}_N$.
\end{lemma}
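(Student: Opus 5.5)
The plan is to argue directly in the classical (Boolean) context $(A,B,R)$, where the necessity operators are the crisp ones recalled in the proof of Lemma~\ref{lem:necessity}:
\[
X^\uN=\{a\in A\mid R(a,b)\neq\bot \Rightarrow b\in X,\ \forall b\in B\},\qquad
Y^\dN=\{b\in B\mid R(a,b)\neq\bot \Rightarrow a\in Y,\ \forall a\in A\}.
\]
I will treat the first component $X$ as the object side, as in those formulas and in Lemma~\ref{lem:necessity}. The key step is to translate membership in $\cC_N$ into a purely combinatorial ``no crossing edges'' condition. This condition is visibly symmetric under complementation.

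\emph{Step 1.} I will show that $(X,Y)\in\cC_N$ if and only if both of the following hold:
\begin{itemize}
\item[(i)] $R(a,b)=\bot$ for all $(a,b)\in Y\times X^c$ and all $(a,b)\in Y^c\times X$;
\item[(ii)] every $a\in Y^c$ has some $b\in X^c$ with $R(a,b)\neq\bot$, and every $b\in X^c$ has some $a\in Y^c$ with $R(a,b)\neq\bot$.
\end{itemize}

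For the forward direction, suppose $X^\uN=Y$ and $Y^\dN=X$. If $a\in Y=X^\uN$, then every $b$ related to $a$ lies in $X$, which gives $R=\bot$ on $Y\times X^c$. Similarly, $b\in X=Y^\dN$ gives $R=\bot$ on $Y^c\times X$. Now take $a\in Y^c$. Then $a\notin X^\uN$, so some $b\notin X$ satisfies $R(a,b)\neq\bot$. Symmetrically, each $b\in X^c$ has a related $a\in Y^c$.

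For the converse, (i) gives $Y\subseteq X^\uN$ and $X\subseteq Y^\dN$. Condition (ii) gives the reverse inclusions, because an element of $Y^c$ has a related object outside $X$ and so cannot lie in $X^\uN$; the same argument applies on the object side.

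\emph{Step 2.} Apply the characterization to $(X^c,Y^c)$. Condition (i) for $(X^c,Y^c)$ refers to the blocks $Y^c\times X$ and $Y\times X^c$, which are exactly the same blocks as in (i) for $(X,Y)$. So (i) transfers immediately.

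Condition (ii) for $(X^c,Y^c)$ requires that every $a\in Y$ is related to some $b\in X$, and every $b\in X$ is related to some $a\in Y$. Here I use that the context is normalized: each attribute $a$ has some object $b$ with $R(a,b)\neq\bot$. If $a\in Y$, then (i) forbids $b\in X^c$, so $b\in X$. The object side is handled the same way. Hence $(X^c,Y^c)\in\cC_N$.

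\emph{Main obstacle.} Step 2(ii) is the only non-formal point, and it is exactly where normalization (no empty rows or columns) is needed. Without it the statement fails. For example, an attribute $a$ with an empty row belongs to $X^\uN$ for every $X$, including $X=\varnothing$, so $(\varnothing,\varnothing)\in\cC_N$ while its complement $(B,A)$ still satisfies the conditions. A more direct counterexample arises when $Y$ contains an isolated attribute. I will therefore state explicitly that the standing assumption of a normalized context is being used, since Lemma~\ref{lemma:complement} is quoted here for the associated Boolean context $(A,B,R^\B)$, which inherits normalization from $(A,B,R,\sigma)$.
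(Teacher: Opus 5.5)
Your proof is correct. The paper gives no proof of Lemma~\ref{lemma:complement}; it only cites \cite{COAMfac2024}, so there is no in-text argument to compare against. Your route is direct and self-contained. You rewrite $(X,Y)\in\cC_N$ as two conditions. First, $R$ vanishes on the blocks $Y\times X^c$ and $Y^c\times X$. Second, every attribute in $Y^c$ is related to some object outside $X$, and every object in $X^c$ is related to some attribute outside $Y$. The block condition is visibly invariant under complementation. The non-emptiness now needed for $Y$ and $X$ follows from the absence of empty rows and columns.

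You are right to insist on that hypothesis. As literally stated (``given a context $(A,B,R)$'') the lemma is false without it. Only this half of the paper's notion of normalized context is used, and $(A,B,R^{\B})$ inherits it from the standing assumption on $(A,B,R,\sigma)$.

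One slip in your ``main obstacle'' paragraph needs fixing. If an attribute $a_0$ has an empty row, then $a_0\in\varnothing^\uN$, so $(\varnothing,\varnothing)\notin\cC_N$, not $\in$. The correct counterexample is that $(B,A)$ always lies in $\cC_N$ while its complement $(\varnothing,\varnothing)$ does not. This is exactly your ``isolated attribute in $Y$'' example with $Y=A$.
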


The following result guarantees that when the set $\cF\cC$ is non-empty then it is possible to find partitions of the sets of attributes and objects. 

\begin{proposition}\label{prop:partition}
If  $\cF\cC\neq \varnothing$, then there exists a family of  pairs  $\{(g_i,f_i)\}_{i\in I}\subseteq \cF\cC$   such that
\begin{itemize}
\item $B = \bigcup_{i\in I}B_i^\top$, with $B_i^\top\cap B_j^\top=\varnothing$, for all $i,j\in I$ with $i\neq j$.
\item  $B = \bigcup_{i\in I}B_i^\bot$, with $B_i^\bot\cap B_j^\bot=\varnothing$, for all $i,j\in I$ with $i\neq j$.
\item $A = \bigcup_{i\in I}A_i^\top$, with $A_i^\top\cap A_j^\top=\varnothing$, for all $i,j\in I$ with $i\neq j$.
\item  $A = \bigcup_{i\in I}A_i^\bot$, with $A_i^\bot\cap A_j^\bot=\varnothing$, for all $i,j\in I$ with $i\neq j$.
\end{itemize}
\end{proposition}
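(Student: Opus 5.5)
The plan is to take $I$ to be a two-element index set, formed by one pair of $\cF\cC$ together with its complementary pair. Notice first that the two requirements on $B$ force this. If the sets $B_i^\top$ partition $B$ and the sets $B_i^\bot=B\setminus B_i^\top$ are also pairwise disjoint, then $|I|\le 2$. Indeed, with three indices $i,j,k$, the set $B_i^\top$ is non-empty and lies in $B_j^\bot\cap B_k^\bot$. So the statement should be established with exactly two pairs, and no finer partition should be attempted.

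First, since $\cF\cC\neq\varnothing$, I fix $(g_1,f_1)=(\bigchi_X,\bigchi_Y)\in\cF\cC$ with $\varnothing\neq X\subset B$ and $\varnothing\neq Y\subset A$. By Theorem~\ref{th:fncn}, $(X,Y)\in\cC_N$. By Lemma~\ref{lemma:complement}, $(X^c,Y^c)\in\cC_N$. Since $X$ and $Y$ are non-empty proper subsets, so are $X^c$ and $Y^c$. Applying Theorem~\ref{th:fncn} in the converse direction gives $(g_2,f_2):=(\bigchi_{X^c},\bigchi_{Y^c})\in\cF_N$, and therefore $(g_2,f_2)\in\cF\cC$.

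Next, I compute the four associated sets directly from the definitions of $B_i^\top$, $B_i^\bot$, $A_i^\top$ and $A_i^\bot$:
\[
B_1^\top=X,\quad B_1^\bot=X^c,\quad B_2^\top=X^c,\quad B_2^\bot=X,
\]
and likewise $A_1^\top=Y$, $A_1^\bot=Y^c$, $A_2^\top=Y^c$, $A_2^\bot=Y$. Each of the four required families is therefore $\{X,X^c\}$ or $\{Y,Y^c\}$. These are partitions of $B$ and $A$ respectively, so each family covers its set and its two members are disjoint.

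The only delicate point is bookkeeping. Lemma~\ref{lemma:complement} is stated for the crisp context $(A,B,R)$, so it must be applied to the associated Boolean context $(A,B,R^\B)$. That is exactly the context in which $\cC_N$ is computed and which Theorem~\ref{th:fncn} refers to. I also need to keep track of which component lives in $B$ and which in $A$, since the paper's notation for $\cC_N$ swaps the roles. Neither point requires any new argument.
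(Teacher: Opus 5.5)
Your proposal is correct and is essentially the paper's own proof. Like the paper, you take one $(\bigchi_X,\bigchi_Y)\in\cF\cC$, pass to $\cC_N$ via Theorem~\ref{th:fncn}, complement using Lemma~\ref{lemma:complement} in the associated Boolean context, and return to $\cF\cC$, so that the two-element family has $\top$- and $\bot$-sets $\{X,X^c\}$ and $\{Y,Y^c\}$. Your extra remark that the four bullets together force $|I|=2$ is also correct, and it shows that the three-pair family $F_1$ in Example~\ref{ex:family_fc} satisfies the $\top$-bullets but not, literally, the $\bot$-bullets.
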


\begin{proof}
 Let us assume that $\cF\cC\neq\varnothing$ and consider $(\bigchi_{X_1},\bigchi_{Y_1})\in\cF\cC$ with  $X_1\subset B$ and $Y_1\subset A$.
 Therefore, by Theorem~\ref{th:fncn}, we have that $(X_1,Y_1)\in \cC_N$, and by Lemma~\ref{lemma:complement}, $(B\setminus X_1,A\setminus Y_1)\in \cC_N$. Thus, considering $X_2 = B\setminus X_1$ and $Y_2 = A\setminus Y_1$, by Theorem~\ref{th:fncn}, we have that $(\bigchi_{X_2},\bigchi_{Y_2})\in \cF_N$, hence $(\bigchi_{X_2},\bigchi_{Y_2})\in\cF\cC$. 
 Now, considering the family of pairs $\{(\bigchi_{X_1},\bigchi_{Y_1}), (\bigchi_{X_2},\bigchi_{Y_2})\}\subseteq \cF\cC$, it is clear that $B_1^\top\cup B_2^\top = B$ with $B_1^\top\cap B_2^\top=\varnothing$, by the definition of these pairs.
 
 The proof for the rest of items follows analogously.\qed
\end{proof}

As a consequence of the proof above, we obtain the following result.

\begin{corollary}\label{cor:compl}
 Given  $(\bigchi_{X}, \bigchi_{Y})\in\cF\cC$, with $\varnothing \neq X\subset B, \varnothing\neq Y\subset A$,  we have that  $(\bigchi_{X^c}, \bigchi_{Y^c})\in\cF\cC$.
\end{corollary}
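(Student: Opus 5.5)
The plan is to move to the Boolean side with Theorem~\ref{th:fncn}, take complements there with Lemma~\ref{lemma:complement}, and then move back with Theorem~\ref{th:fncn} again; this is exactly the argument already used inside the proof of Proposition~\ref{prop:partition}.

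First, since $(\bigchi_X,\bigchi_Y)\in\cF\cC\subseteq\cF_N$, Theorem~\ref{th:fncn} gives $(X,Y)\in\cC_N$ for the associated Boolean context $(A,B,R^\B)$. Next, Lemma~\ref{lemma:complement} applied to that Boolean context yields $(X^c,Y^c)\in\cC_N$. A second application of Theorem~\ref{th:fncn}, in the converse direction, then gives $(\bigchi_{X^c},\bigchi_{Y^c})\in\cF_N$.

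It remains to check the nontriviality conditions in the definition of $\cF\cC$. Because $\varnothing\neq X\subset B$ is a proper nonempty subset, its complement $X^c$ is also nonempty and proper. The same holds for $Y^c$ in $A$. Hence $(\bigchi_{X^c},\bigchi_{Y^c})\in\cF\cC$.

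The only delicate point is bookkeeping. In $\cC_N$ the pairs are written $(X,Y)$ with $X^\uN=Y$ and $Y^\dN=X$, whereas in $\cF\cC$ we have $X\subseteq B$ and $Y\subseteq A$, so we must make sure Lemma~\ref{lemma:complement} is invoked for the same ordering convention that Theorem~\ref{th:fncn} uses. Complementation is symmetric in the two components, so this causes no real difficulty. Otherwise the result follows directly from the cited results.
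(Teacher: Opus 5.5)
Your proposal is correct and matches the paper's argument. The paper obtains this corollary directly from the proof of Proposition~\ref{prop:partition}: pass to $\cC_N$ via Theorem~\ref{th:fncn}, take complements with Lemma~\ref{lemma:complement} in the associated Boolean context, and return via Theorem~\ref{th:fncn}, with your explicit check that $X^c$ and $Y^c$ are nonempty proper subsets being a detail the paper leaves implicit.
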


The previous results allows us to determine decompositions into independent subcontexts of a formal context within the multi-adjoint framework, as the following proposition proves.

\begin{proposition}\label{prop:FC_ind_subcontext}
The family of pairs given in Proposition~\ref{prop:partition} provides a decomposition into independent subcontexts of the context $(A, B, R,\sigma)$.
\end{proposition}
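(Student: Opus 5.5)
We take the family $\{(\bigchi_{X_1},\bigchi_{Y_1}),(\bigchi_{X_2},\bigchi_{Y_2})\}\subseteq\cF\cC$ built in the proof of Proposition~\ref{prop:partition}, with $X_2=X_1^c$ and $Y_2=Y_1^c$. We set $\Lambda=\{1,2\}$, $A_\lambda=Y_\lambda=A_\lambda^\top$ and $B_\lambda=X_\lambda=B_\lambda^\top$. We then check the three conditions of Definition~\ref{def:decomp_indep} for the tuples $(A_\lambda,B_\lambda,R_{A_\lambda\times B_\lambda},\sigma_{A_\lambda\times B_\lambda})$.

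The covering and disjointness conditions follow directly from Proposition~\ref{prop:partition}: $A_1\cup A_2=A$, $A_1\cap A_2=\varnothing$, and likewise for $B$. The condition on $\sigma$ holds automatically, because we fixed at the start of Section~\ref{sec:3} that every conjunctor of the frame has no zero-divisors. So the substance of the proof is showing that each tuple is a separable subcontext (Definition~\ref{def.Lsep_subc}).

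\emph{Non-emptiness.} The sets $A_\lambda$ and $B_\lambda$ are non-empty because the pairs lie in $\cF\cC$, which is defined with $\varnothing\neq X\subset B$ and $\varnothing\neq Y\subset A$; for $\lambda=2$ this uses Corollary~\ref{cor:compl}.

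\emph{Vanishing off the block.} Take $g=\bigchi_{X_\lambda}$ and $f=\bigchi_{Y_\lambda}$, so that $g=f^{\dN}$ and $f=g^{\uN}$.
\begin{itemize}
\item For $(a',b)\in A_\lambda^c\times B_\lambda$ we have $f(a')=\bot_1$ and $g(b)=\top_2$. The first item of Lemma~\ref{lem:botrelation} gives $R(a',b)=\bot$.
\item For $(a,b')\in A_\lambda\times B_\lambda^c$ we have $f(a)=\top_1$ and $g(b')=\bot_2$. The second item of Lemma~\ref{lem:botrelation} gives $R(a,b')=\bot$.
\end{itemize}
Alternatively, both facts follow from Theorem~\ref{th:fncn} and Remark~\ref{re:dp} applied to the Boolean context $(A,B,R^\B)$, since $R^\B(a,b)=0$ if and only if $R(a,b)=\bot$.

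\emph{A non-bottom entry inside the block.} This is the step where normalization is needed, and I expect it to be the only point requiring care. Pick any $a\in A_\lambda$. Since the context is normalized, there is some $b\in B$ with $R(a,b)\neq\bot$. By the previous step, $R(a,b')=\bot$ for every $b'\in B_\lambda^c$, so this $b$ must lie in $B_\lambda$. This gives the required pair $(a,b)\in A_\lambda\times B_\lambda$ with $R(a,b)\neq\bot$. Apart from this, the argument is a direct assembly of the earlier results.
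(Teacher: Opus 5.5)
Your proposal is correct and follows the paper's proof: take a pair of $\cF\cC$ together with its complement, get the vanishing of $R$ off the blocks from Lemma~\ref{lem:botrelation}, get a non-$\bot$ entry inside each block from normalization, and get the condition on $\sigma$ from the standing no-zero-divisor hypothesis. Your explicit argument that the normalization witness $b$ must lie in $B_\lambda$, because of the vanishing step, is a detail the paper only asserts ("since the context is normalized").
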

\begin{proof}
    Let us consider any pair $(\bigchi_{X}, \bigchi_{Y})\in\cF\cC$, with $\varnothing \neq X\subset B, \varnothing\neq Y\subset A$,  and by Theorem~\ref{th:fncn} we have that  $(X,Y)\in \cC_N$. Moreover, from Corollary~\ref{cor:compl}, we know that $\{(X,Y),(X^c,Y^c)\}$ is  a family of pairs of $\cC_N$ providing a partition of the sets of objects and attributes. 
    Now, we will show that the tuple $(Y,X,R_{Y\times X},\sigma_{Y\times X})$ is a separable subcontext. We know that $X\subset B$ and $Y\subset A$ are non-empty subsets. By Lemma~\ref{lem:botrelation} and $(\bigchi_{X}, \bigchi_{Y})\in\cF\cC$, we obtain that $R(a,b')=\bot$, for all $(a,b')\in Y\times X^c$ and $R(a',b)=\bot$, for all $(a',b)\in Y_1^c\times X$. Lastly, since the context is normalized, we can claim that there exist $a\in Y$ and $b\in X$ such that $R(a,b)\neq \bot$. Thus, $(Y,X,R_{Y\times X},\sigma_{Y\times X})$ is a separable subcontext. The proof to show that $(Y^c,X^c,R_{Y^c\times X^c},\sigma_{Y^c\times X^c})$ is a separable subcontext follows analogously. Therefore, since the conjunctors of the frame have no zero-divisors, we obtain that the context has a decomposition into independent subcontexts. \qed
\end{proof}
 
We will show the application of the aforementioned results in the next example.
 
\begin{example}\label{ex:family_fc}

We continue with Example~\ref{ex:FN_CN} and Example~\ref{ex:FN_CN_th}. In the latter, the elements showed in the example are indeed the elements of the set $\cF\cC$ except the pairs $(\bigchi_{X_1},\bigchi_{Y_1})$ and $(\bigchi_{X_8},\bigchi_{Y_8})$ since $(X_1,Y_1) = (\varnothing,\varnothing)$ and $(X_8, Y_8)=(B,A)$. From this set, since it is clearly not empty we can find a family of pair that provides a partition of the sets of attributes and objects as Proposition~\ref{prop:partition} states. All the different families that we can find satisfying Proposition~\ref{prop:partition} are listed below
\begin{align*}
    F_1 = &\{(\bigchi_{X_2},\bigchi_{Y_2}), (\bigchi_{X_3},\bigchi_{Y_3}), (\bigchi_{X_5},\bigchi_{Y_5})\}\\
    F_2 = &\{(\bigchi_{X_3},\bigchi_{Y_3}),(\bigchi_{X_6},\bigchi_{Y_6})\}\\
    F_3 = &\{(\bigchi_{X_4},\bigchi_{Y_4}),(\bigchi_{X_5},\bigchi_{Y_5})\}\\
    F_4 = &\{(\bigchi_{X_2},\bigchi_{Y_2}), (\bigchi_{X_7},\bigchi_{Y_7})\}
\end{align*}
 For each family, we can obtain a decomposition into independent subcontexts of the context $(A,B,R,\sigma)$. For instance, if we consider the family $F_2$, we have that the partition of the sets of attributes and objects can be provided by selecting the sets $A^\top$ and $B^\top$ of the corresponding pairs, that is, $A^\top_3 = \{a_4\}$ and $A^\top_6 = \{a_1,a_2,a_3\}$, and $B_3^\top = \{b_2\}$ and $B_6^\top = \{b_1,b_3,b_4\}$. 
 Therefore, a decomposition into independent subcontexts is $\{(A^\top_\lambda, B^\top_\lambda, R_{A^\top_\lambda\times B^\top_\lambda}, \sigma_{A^\top_\lambda\times B^\top_\lambda})\}_{\lambda\in\{3,6\}}$. It is worth pointing out that we obtain the same decomposition if we choose $A^\bot$ and $B^\bot$ instead, since $X_3^c = X_6$ and $Y_3^c = Y_6$.
\qed
\end{example}

It is convenient to point out that, in the considered environment, whenever we find a family of separable subcontexts whose attributes and objects  form a partition of $A$ and $B$, respectively, we have a decomposition into independent subcontexts since all the conjunctors considered in the framework do not have zero divisors. 

\begin{corollary}\label{cor:decomp}
     If  $\cF\cC\neq \varnothing$, then each pair of  $\cF\cC$  determines an independent subcontext.
\end{corollary}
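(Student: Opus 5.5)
The plan is to take an arbitrary pair $(\bigchi_X,\bigchi_Y)\in\cF\cC$, with $\varnothing\neq X\subset B$ and $\varnothing\neq Y\subset A$, and show that the two-element family $\{(\bigchi_X,\bigchi_Y),(\bigchi_{X^c},\bigchi_{Y^c})\}$ gives a decomposition of $(A,B,R,\sigma)$ in the sense of Definition~\ref{def:decomp_indep}. Then $(Y,X,R_{Y\times X},\sigma_{Y\times X})$ is an independent subcontext by definition. The family lies in $\cF\cC$ by Corollary~\ref{cor:compl}. It partitions $B$ into $X,X^c$ and $A$ into $Y,Y^c$, so the union and disjointness requirements are immediate. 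Hence it is exactly a family of the kind described in Proposition~\ref{prop:partition}.

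Next I would verify separability of $(Y,X,R_{Y\times X},\sigma_{Y\times X})$. By Theorem~\ref{th:fncn} we have $(X,Y)\in\cC_N$, so $\bigchi_X=\bigchi_Y^{\dN}$ and $\bigchi_Y=\bigchi_X^{\uN}$. I would apply Lemma~\ref{lem:botrelation} twice.
\begin{itemize}
\item First item, for $a\in Y^c$, $b\in X$: here $\bigchi_Y(a)=\bot_1$ and $\bigchi_X(b)=\top_2$, which yields $R(a,b)=\bot$ on $Y^c\times X$.
\item Second item, for $a\in Y$, $b\in X^c$: this yields $R(a,b)=\bot$ on $Y\times X^c$.
\end{itemize}
The same argument, run on $(\bigchi_{X^c},\bigchi_{Y^c})$, gives separability of the complementary subcontext. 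These are the steps already carried out in Proposition~\ref{prop:FC_ind_subcontext}, so I would invoke that proposition directly.

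The one delicate point is the non-triviality condition: there must exist $a\in Y$ and $b\in X$ with $R(a,b)\neq\bot$. I would use normalization. Pick any $a\in Y$, which exists since $Y\neq\varnothing$. Normalization gives some $b\in B$ with $R(a,b)\neq\bot$. If $b\in X^c$, this contradicts $R=\bot$ on $Y\times X^c$, so $b\in X$. The same argument applied to $Y^c$ and $X^c$ covers the complementary subcontext, both sets being nonempty by Corollary~\ref{cor:compl}.

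Finally, the condition on zero-divisors in Definition~\ref{def:decomp_indep} holds automatically, since all conjunctors of the fixed frames have no zero-divisors. Therefore $(Y,X,R_{Y\times X},\sigma_{Y\times X})$ is an independent subcontext. The only step requiring care is the non-triviality argument; the rest is a direct assembly of Theorem~\ref{th:fncn}, Lemma~\ref{lem:botrelation}, Corollary~\ref{cor:compl} and Proposition~\ref{prop:FC_ind_subcontext}.
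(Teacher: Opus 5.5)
Your proposal is correct and follows essentially the same route as the paper. The corollary is derived directly from Proposition~\ref{prop:FC_ind_subcontext}, whose proof pairs $(\bigchi_X,\bigchi_Y)$ with its complement via Corollary~\ref{cor:compl}, obtains $R=\bot$ on $Y\times X^c$ and on $Y^c\times X$ from Lemma~\ref{lem:botrelation}, and uses normalization and the absence of zero-divisors exactly as you do. Your explicit non-triviality argument (normalization gives some $b$ with $R(a,b)\neq\bot$ for $a\in Y$, and the vanishing on $Y\times X^c$ forces $b\in X$) spells out a step that the paper only asserts.
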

\begin{proof}
    The proof straightforwardly holds from Proposition~\ref{prop:FC_ind_subcontext}.\qed
\end{proof}

Now, we will show that the other implication is true, that is, if it is possible to find a decomposition of a context into independent subcontexts, then the set $\cF\cC$ is not empty.

\begin{proposition}\label{prop:indepsubc_FN}
    If the context $(A,B,R,\sigma)$ has a decomposition into independent subcontexts $\{(A_\lambda, B_\lambda,  R_{A_\lambda\times B_\lambda}, \sigma_{A_\lambda\times B_\lambda})\}_{\lambda\in \Lambda}$, then $(\bigchi_{B_\lambda}, \bigchi_{A_\lambda})\in \cF_N$, for all $\lambda\in \Lambda$.
\end{proposition}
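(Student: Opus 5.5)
Our plan is to reduce the statement to the Boolean setting through Theorem~\ref{th:fncn}: it suffices to show that $(B_\lambda, A_\lambda)\in\cC_N$ for the associated Boolean context $(A,B,R^\B)$, that is, $B_\lambda^\uN = A_\lambda$ and $A_\lambda^\dN = B_\lambda$. Here the crisp necessity operators are the ones described in the proof of Lemma~\ref{lem:necessity}. Namely, $X^\uN=\{a\in A\mid R^\B(a,b)=1 \Rightarrow b\in X\}$, and dually $Y^\dN=\{b\in B\mid R^\B(a,b)=1\Rightarrow a\in Y\}$. Theorem~\ref{th:fncn} then yields $(\bigchi_{B_\lambda},\bigchi_{A_\lambda})\in\cF_N$ immediately.

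We first prove $B_\lambda^\uN = A_\lambda$.
\begin{itemize}
\item For $A_\lambda\subseteq B_\lambda^\uN$: take $a\in A_\lambda$ and any $b$ with $R(a,b)\neq\bot$. Since $(A_\lambda,B_\lambda,R_{A_\lambda\times B_\lambda},\sigma_{A_\lambda\times B_\lambda})$ is separable, $R(a,b')=\bot$ for all $b'\in B_\lambda^c$. Hence $b\in B_\lambda$, and so $a\in B_\lambda^\uN$.
\item For the reverse inclusion: take $a\notin A_\lambda$. Because the $A_\mu$ partition $A$, there is $\mu\neq\lambda$ with $a\in A_\mu$. The context is normalized, so there exists $b$ with $R(a,b)\neq\bot$. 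By separability of the $\mu$-th subcontext, $b\in B_\mu$, and since $B_\mu\cap B_\lambda=\varnothing$ we get $b\notin B_\lambda$. Thus $a\notin B_\lambda^\uN$.
\end{itemize}
The equality $A_\lambda^\dN=B_\lambda$ follows symmetrically, using the fourth condition of Definition~\ref{def.Lsep_subc} and the normalization condition on objects.

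The only delicate point is the reverse inclusion, where the disjointness and covering properties of the decomposition together with normalization are genuinely needed: an attribute outside $A_\lambda$ must be witnessed as related to some object outside $B_\lambda$. If one prefers to avoid the Boolean detour, the same argument can be run directly on the fuzzy operators. In that version, for $a\in A_\lambda$ every term $\bigchi_{B_\lambda}(b)\dimp^{\sigma_\oo}R(a,b)$ equals $\top_1$, by Proposition~\ref{prop:atproperties}~(3) and~(5). For $a\notin A_\lambda$, the witness $b$ gives $\bot_2\dimp^{\sigma_\oo}R(a,b)=\bot_1$, because the conjunctors have no zero-divisors.
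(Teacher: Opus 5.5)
Your proof is correct, but it is organized differently from the paper's. The paper never passes to $R^\B$. It computes $\bigchi_{B_\lambda}^\uN(a)$ directly, splitting the infimum into the part over $B_\lambda$, where $\top_2\dimp^{\sigma_\oo}R(a,b)=\top_1$, and the part over $B\setminus B_\lambda$. It then runs the same two cases you do. If $a\in A_\lambda$, separability forces $R(a,b)=\bot$ off $B_\lambda$. If $a\in A_\mu$ with $\mu\neq\lambda$, normalization and separability give a witness $b'\in B_\mu$ with $\bot_2\dimp^{\sigma_\oo}R(a,b')=\bot_1$. In effect, the paper re-does inline, for the particular set $B_\lambda$, the algebraic computation already carried out in the proof of Lemma~\ref{lem:necessity}. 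Your closing paragraph is exactly that proof. Factoring through Theorem~\ref{th:fncn} instead separates the ingredients more cleanly. The lattice-theoretic input (boundary values of the implications and the absence of zero-divisors) is used only once, inside Lemma~\ref{lem:necessity}. What remains is a purely set-theoretic check on $R^\B$. Your route also yields $(B_\lambda,A_\lambda)\in\cC_N$ directly, which is precisely what the first half of Theorem~\ref{th:diis_ma_cl} needs, so that argument would not have to go to $\cF_N$ and come back. The paper's direct version has the merit of being self-contained, since it does not rely on Theorem~\ref{th:fncn}. You also make explicit that the witness in $B_\mu$ lies outside $B_\lambda$ because $B_\mu\cap B_\lambda=\varnothing$. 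This is what places it in the infimum over $B\setminus B_\lambda$, and the paper leaves that step implicit.
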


\begin{proof}
    Let us consider a decomposition into independent subcontexts $\{(A_\lambda, B_\lambda,  R_{A_\lambda\times B_\lambda}, \sigma_{A_\lambda\times B_\lambda})\}_{\lambda\in \Lambda}$. We are going to demonstrate the equality $\bigchi_{B_\lambda}^\uN = \bigchi_{A_\lambda}$ holds for all $\lambda\in\Lambda$. Given  $\lambda\in\Lambda$, we have that
    \begin{align*}
        \bigchi_{B_\lambda}^\uN(a) = & \inf\{\bigchi_{B_\lambda}(b)\dimp^{\sigma_\oo} R(a,b)\mid b\in B \}\\
        = & \inf\{\bot_2\dimp^{\sigma_\oo} R(a,b)\mid b\in B\setminus B_{\lambda}\}\wedge \inf\{\top_2\dimp^{\sigma_\oo} R(a,b)\mid b\in B_\lambda \}
    \end{align*}
    for all $a\in A$. By Proposition~\ref{prop:atproperties}, we obtain that $\top_2\dimp^{\sigma_\oo} R(a,b) = \top_2$, for all $b\in B_\lambda$. Thus, we have that 
    \begin{align*}
        \bigchi_{B_\lambda}^\uN(a) = & \inf\{\bot_2\dimp^{\sigma_\oo} R(a,b)\mid b\in B\setminus B_{\lambda}\}\\
    \end{align*}
    for all $a\in A$.
    Now, two cases can be distinguished depending on the subset of attributes to which $a$ belongs.
    \begin{itemize}
        \item If $a\in A_\lambda$ and $b\in B\setminus B_\lambda$, then $R(a,b) =\bot$ since $(A_\lambda, B_\lambda,  R_{A_\lambda\times B_\lambda}, \sigma_{A_\lambda\times B_\lambda})$ is a separable subcontext. Hence, by Proposition~\ref{prop:atproperties}, we obtain that
        $$\bigchi_{B_\lambda}^\uN(a) = \inf\{\bot_2\dimp^{\sigma_\oo} R(a,b)\mid b\in B\setminus B_{\lambda}\}=\top_2$$
        \item If $a\not\in A_\lambda$, then there exists $\mu\in\Lambda$ such that $a\in A_\mu$ and since the context is normalized and $(A_\mu, B_\mu,  R_{A_\mu\times B_\mu}, \sigma_{A_\mu\times B_\mu})$ is a separable subcontext, we can assert that there exists $b'\in B_\mu$ such that $R(a,b')\neq \bot$. Therefore, $\bot_2\dimp^{\sigma_\oo} R(a,b') = \bot_1$ due to the conjunctors $\adjoint^\oo_k$ have no zero-divisors.
    \end{itemize}
    Thus, we can assert that $\bigchi_{B_\lambda}^\uN = \bigchi_{A_\lambda}$, for all $\lambda\in\Lambda$. The proof of $\bigchi_{A_\lambda}^\dN = \bigchi_{B_\lambda}$ follows analogously and therefore, we can conclude that $(\bigchi_{B_\lambda}, \bigchi_{A_\lambda})\in\cF_N$, for all $\lambda\in\Lambda$.\qed
\end{proof}

Consequently, it has been established a connection between the fuzzy relation of a context and its associated Boolean relation, which lets us to know the existence of decompositions into independent subcontexts of a formal context in the fuzzy framework.

 \begin{theorem}\label{th:diis_ma_cl}
 The context $(A, B, R, \sigma)$ can be decomposed into independent subcontexts if and only if its associated Boolean context $(A, B, R^\B)$ can be decomposed into independent subcontexts.
\end{theorem}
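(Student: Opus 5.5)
The plan is to pass through the set $\cC_N$ and the Boolean characterization of Remark~\ref{re:dp}, in both directions. The key observation is that the separability conditions in Definition~\ref{def.Lsep_subc} only mention whether $R(a,b)=\bot$ or not. Hence a tuple $(Y,X,R_{Y\times X},\sigma_{Y\times X})$ is a separable subcontext of $(A,B,R,\sigma)$ exactly when $(Y,X,R^\B_{Y\times X})$ is a separable subcontext of $(A,B,R^\B)$, because $R(a,b)\neq\bot$ iff $R^\B(a,b)=1$. Normalization transfers for the same reason. The zero-divisor condition in Definition~\ref{def:decomp_indep} holds automatically in the fuzzy context, since all conjunctors of the fixed frames have no zero-divisors. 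In the Boolean context the only conjunctor is the classical conjunction, which also has no zero-divisors.

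($\Rightarrow$) Assume $(A,B,R,\sigma)$ has a decomposition $\{(A_\lambda,B_\lambda,R_{A_\lambda\times B_\lambda},\sigma_{A_\lambda\times B_\lambda})\}_{\lambda\in\Lambda}$. I would argue directly: the same partitions $\{A_\lambda\}$, $\{B_\lambda\}$ yield separable subcontexts of $(A,B,R^\B)$ by the observation above. Therefore they form a decomposition of the Boolean context. As a cross-check via the machinery, Proposition~\ref{prop:indepsubc_FN} gives $(\bigchi_{B_\lambda},\bigchi_{A_\lambda})\in\cF_N$, and Theorem~\ref{th:fncn} then gives $(B_\lambda,A_\lambda)\in\cC_N$. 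By Remark~\ref{re:dp}, each such pair determines an independent subcontext of $(A,B,R^\B)$.

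($\Leftarrow$) Assume $(A,B,R^\B)$ decomposes into independent subcontexts $\{(A_\lambda,B_\lambda)\}_{\lambda\in\Lambda}$. There are two routes:
\begin{itemize}
\item \emph{Direct.} Transfer the separability conditions back to $R$ as above, which gives the decomposition of $(A,B,R,\sigma)$ immediately.
\item \emph{Via the necessity operators.} Apply the crisp version of Proposition~\ref{prop:indepsubc_FN}, i.e.\ its proof in the Boolean case, to get $(B_\lambda,A_\lambda)\in\cC_N$. Theorem~\ref{th:fncn} then gives $(\bigchi_{B_\lambda},\bigchi_{A_\lambda})\in\cF_N$. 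If $|\Lambda|\ge 2$, these sets are proper and nonempty, so the pair lies in $\cF\cC$. Proposition~\ref{prop:FC_ind_subcontext} and Corollary~\ref{cor:decomp} then give independent subcontexts of $(A,B,R,\sigma)$, and using the whole family $\{(A_\lambda,B_\lambda)\}$ gives the decomposition.
\end{itemize}

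The main obstacle is bookkeeping rather than depth. Under Definition~\ref{def:decomp_indep} a context trivially decomposes with $|\Lambda|=1$, and the two contexts then behave identically, so that case is harmless. The real work is making explicit that $\cC_N$ is computed in the Boolean context with the same necessity operators. Theorem~\ref{th:fncn} was stated exactly for this identification, so I would invoke it rather than re-derive anything.
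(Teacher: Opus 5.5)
Your proof is correct, but its main line differs from the paper's.

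The paper never argues at the level of the definitions. For ($\Rightarrow$) it combines Proposition~\ref{prop:indepsubc_FN} and Theorem~\ref{th:fncn} to get $(B_\lambda,A_\lambda)\in\cC_N$, and then reads off the Boolean decomposition. For ($\Leftarrow$) it obtains $(X_i,Y_i)\in\cC_N$, lifts to $\cF_N$ by Theorem~\ref{th:fncn}, and concludes with Propositions~\ref{prop:partition} and~\ref{prop:FC_ind_subcontext}. This is essentially your ``cross-check'' and your second route.

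Your direct observation is what is genuinely different. Normalization and the separability clauses depend only on the support $\{(a,b)\mid R(a,b)\neq\bot\}$, which $R$ and $R^\B$ share. The partition conditions do not involve $R$ at all, and the zero-divisor clause holds automatically under the standing hypothesis. This makes the theorem essentially definitional. It also shows that the no-zero-divisor assumption is needed only for the third clause of Definition~\ref{def:decomp_indep}. The paper's route, by contrast, uses that assumption repeatedly, through Lemma~\ref{lem:necessity} and Proposition~\ref{prop:indepsubc_FN}.

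Your second route also improves on one step of the paper. You get $(B_\lambda,A_\lambda)\in\cC_N$ from the crisp version of Proposition~\ref{prop:indepsubc_FN}. The paper instead cites Remark~\ref{re:dp}, which as stated only gives the opposite implication: pairs of $\cC_N$ yield independent subcontexts.

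What the paper's route buys in exchange is an identification of the decompositions on both sides with pairs of $\cF\cC$, i.e.\ with fixed points of the necessity operators. That identification is what justifies computing them through $\cF\cC$ in Step~3 of Section~\ref{sec:5}.
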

\begin{proof}
Let us assume that the context $(A,B,R,\sigma)$ has a decomposition into independent subcontexts $\{(A_\lambda, B_\lambda,  R_{A_\lambda\times B_\lambda}, \sigma_{A_\lambda\times B_\lambda})\}_{\lambda\in \Lambda}$, where $\Lambda$ is a non-empty index set. By Proposition~\ref{prop:indepsubc_FN} and Theorem~\ref{th:fncn}, we obtain that $(B_\lambda, A_\lambda)\in\cC_N$, for all $\lambda\in\Lambda$. Moreover, since the subsets $B_\lambda$ and $A_\lambda$ form a partition of their respective sets, we can assert that $\{(A_\lambda, B_\lambda,  R^\B_{A_\lambda\times B_\lambda})\}_{\lambda\in \Lambda}$ is a decomposition into independent subcontexts of the context $(A,B,R^\B)$.

Now, let us suppose that the context $(A,B,R^\B)$ has a decomposition into independent subcontexts $\{(Y_i,X_i, R_{Y_i\times X_i})\}_{i\in I}$, where $I$ is a non-empty index set. Then, by Remark~\ref{re:dp}, we have that $(X_i,Y_i)\in\cC_N$, for all $i\in I$.
Thus, by Theorem~\ref{th:fncn}, we have that $(\bigchi_{X_i},\bigchi_{Y_i})\in \cF_N$, for all $i\in I$. Moreover, the family of pairs $\{(\bigchi_{X_i},\bigchi_{Y_i})\}_{i\in I}\subseteq \cF\cC$ and satisfy the conditions of Proposition~\ref{prop:partition}. Therefore, by Proposition~\ref{prop:FC_ind_subcontext}, we have that $\{(Y_i,X_i, R_{Y_i\times X_i},\sigma_{Y_i\times X_i})\}_{i\in I}$ is a decomposition into independent subcontexts of $(A,B,R,\sigma)$.\qed
\end{proof}

 In the following example, we present a further illustration of the preceding results.
 
\begin{example}
    Coming back to Example~\ref{ex:FN_CN}, we can provide a decomposition into independent subcontexts of the context $(A,B,R,\sigma)$ according to Definition~\ref{def:decomp_indep}, that is, $\{(A_\lambda, B_\lambda,  R_{A_\lambda\times B_\lambda}, \sigma_{A_\lambda\times B_\lambda})\}_{\lambda\in \Lambda}$, where $\Lambda =  \{1,2,3\}$, the subsets of attributes are $A_1 =\{a_1,a_2\}$, $A_2 =\{a_3\}$ and $A_3=\{a_4\}$, and the subsets of objects are $B_1=\{b_1,b_3\}$, $B_2=\{b_4\}$ and $B_3 = \{b_2\}$. The relations and the mappings are just the restriction of the fuzzy relation $R$ and the mapping $\sigma$.
    Therefore, by Proposition~\ref{prop:indepsubc_FN}, we have that the pairs $(\bigchi_{B_1}, \bigchi_{A_1}),(\bigchi_{B_2}, \bigchi_{A_2}),(\bigchi_{B_3}, \bigchi_{A_3})\in\cF_N$. Indeed, these pairs correspond to the pairs $(g_7,f_7),(g_2,f_2)$ and $(g_5,f_5)$, respectively, in Example~\ref{ex:FN_CN} and, namely the family $F_1$ in Example~\ref{ex:family_fc}. By Theorem~\ref{th:diis_ma_cl}, the associated Boolean context $(A,B,R^\B)$ has a decomposition into independent subcontexts. In fact, by Theorem~\ref{th:fncn}, we have that pairs $\{(B_\lambda, A_\lambda)\}_{\lambda\in\Lambda}\subset \cC_N$, and so a decomposition is $\{(A_\lambda, B_\lambda,  R^\B_{A_\lambda\times B_\lambda})\}_{\lambda\in \Lambda}$.

    On the other hand, we can provide a decomposition of the associated Boolean context $(A,B,R^\B)$, that is,
    $\{(A_\mu, B_\mu,  R^\B_{A_\mu\times B_\mu})\}_{\mu\in\Gamma}$, where $\Gamma =  \{1,2\}$, the subsets of attributes are $A_1 =\{a_1,a_2\}$ and $A_2=\{a_3,a_4\}$, and the subsets of objects are $B_1=\{b_1,b_3\}$ and $B_2 = \{b_2,b_4\}$. By Theorem~\ref{th:diis_ma_cl}, the context $(A,B,R,\sigma)$ has a decomposition into independent subcontexts, more precisely the decomposition is  $\{(A_\mu, B_\mu,  R_{A_\mu\times B_\mu}, \sigma_{A_\mu\times B_\mu})\}_{\mu\in\Gamma}$ which correspond to the family $F_3$ in Example~\ref{ex:family_fc}.
    \qed
\end{example}

\section{Properties of the independent subcontexts in the multi-adjoint framework}\label{sec:4}

This section presents several properties that independent subcontexts satisfy, with a particular focus on the pairs in the set $\cF\cC$. {In order to contextualize this discussion, the conditions established in the previous section shall be considered here as well.}
These properties were previously investigated in the classical setting in~\cite{COAMfac2024}, and we will show that some of these are also valid in the multi-adjoint framework. It is important to note that, as outlined in Corollary~\ref{cor:decomp}, each pair of $\cF\cC$ characterizes an independent subcontext. Therefore, all the following results are indeed properties holds by independent subcontexts.

The following result shows under what conditions a pair $(g,f)\in \mathcal{F}_N$ is related to concepts of the  multi-adjoint concept lattice associated with the context, which is denoted by $\cM$. 

\begin{proposition}\label{isc:f1}

Given a pair $(g,f)\in\cF\cC$, if $g^\up \neq f_\bot$, then
the pair $\langle g, g^\up\rangle$ is  a multi-adjoint concept, that is, $\langle g, g^\up\rangle\in \mathcal{M}$. Dually, if $f^\down\neq g_\bot$, then $\langle f^\down, f \rangle\in \mathcal{M}$.
\end{proposition}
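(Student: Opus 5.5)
The plan is to compute $g^{\up}$ and then $g^{\up\down}$ explicitly, using that every pair of $\cF\cC$ is a pair of characteristic functions. Since $(g,f)\in\cF\cC$, we have $g=\bigchi_X$ and $f=\bigchi_Y$ with $\varnothing\neq X\subset B$, $\varnothing\neq Y\subset A$ and $(X,Y)\in\cC_N$ by Theorem~\ref{th:fncn}. By Lemma~\ref{lem:botrelation}, applied to $g=f^{\dN}$ and $f=g^{\uN}$, we get $R(a,b)=\bot$ whenever exactly one of $a\in Y$, $b\in X$ holds. Once we show $g^{\up\down}=g$, the pair $\langle g,g^{\up}\rangle$ is a multi-adjoint concept, since $(g^{\up})^{\down}=g$ and $g^{\up}=g^{\up}$ trivially.

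\textbf{Step 1: compute $g^{\up}$.} Here $g^{\up}(a)=\inf\{R(a,b)\swarrow^{\sigma(a,b)} \bigchi_X(b)\mid b\in B\}$. For $b\notin X$ the term is $R(a,b)\swarrow\bot_2=\top_1$ by Proposition~\ref{prop:atproperties}~(5). Hence
\[
g^{\up}(a)=h(a):=\inf\{R(a,b)\swarrow^{\sigma(a,b)}\top_2\mid b\in X\}.
\]
If $a\notin Y$, then $R(a,b)=\bot$ for all $b\in X$, and since $X\neq\varnothing$ we get $h(a)\preceq_1\bot\swarrow\top_2$. By Proposition~\ref{prop:atproperties}~(6) and Corollary~\ref{cor:topno0}, $\bot\swarrow\top_2=\max\{x\mid x\adjoint\top_2=\bot\}=\bot_1$, because the conjunctors have no zero-divisors. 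So $g^{\up}$ vanishes outside $Y$. Consequently the hypothesis $g^{\up}\neq f_\bot$ yields some $a_0\in Y$ with $h(a_0)\neq\bot_1$.

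\textbf{Step 2: compute $g^{\up\down}(b)=\inf\{R(a,b)\nwarrow_{\sigma(a,b)}h(a)\mid a\in A\}$.} Terms with $a\notin Y$ equal $R(a,b)\nwarrow\bot_1=\top_2$ by Proposition~\ref{prop:atproperties}~(5), so only $a\in Y$ matters.
\begin{itemize}
\item If $b\in X$, then for every $a\in Y$ we have $h(a)\preceq_1 R(a,b)\swarrow\top_2$. The adjoint property gives $h(a)\adjoint\top_2\leq R(a,b)$, hence $\top_2\preceq_2 R(a,b)\nwarrow h(a)$, and so $g^{\up\down}(b)=\top_2$.
\item If $b\notin X$, then $R(a,b)=\bot$ for all $a\in Y$, so $g^{\up\down}(b)\preceq_2\bot\nwarrow h(a_0)=\max\{y\mid h(a_0)\adjoint y=\bot\}=\bot_2$. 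This uses Proposition~\ref{prop:atproperties}~(7), $h(a_0)\neq\bot_1$ and the absence of zero-divisors.
\end{itemize}
Thus $g^{\up\down}=\bigchi_X=g$ and $\langle g,g^{\up}\rangle\in\cM$.

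\textbf{Dual statement and main obstacle.} The dual statement, that $f^{\down}\neq g_\bot$ implies $\langle f^{\down},f\rangle\in\cM$, follows symmetrically. One computes $f^{\down}(b)=\inf\{R(a,b)\nwarrow\top_1\mid a\in Y\}$ for $b\in X$ and $\bot_2$ otherwise, then checks $f^{\down\up}=\bigchi_Y$ using item (6) instead of (7). The delicate point is the case $b\notin X$ (respectively $a\notin Y$) in Step 2. This is exactly where the hypothesis $g^{\up}\neq f_\bot$ is needed: it provides a nonzero value $h(a_0)$ which, by the no-zero-divisor assumption, forces $\bot\nwarrow h(a_0)=\bot_2$. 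Without it, $g^{\up}=f_\bot$ and $g^{\up\down}=g_\top\neq g$.
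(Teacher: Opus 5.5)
Your proof is correct and follows essentially the same route as the paper. Like the paper, you use Lemma~\ref{lem:botrelation} and the absence of zero-divisors to show that any attribute $a_0$ with $g^\up(a_0)\neq\bot_1$ must lie in $Y$ (you show this directly, the paper by contradiction), and then use $R(a_0,b)=\bot$ for $b\notin X$ to force $g^{\up\down}(b)=\bot_2$. The only cosmetic difference is that the paper gets $g\preceq_2 g^{\up\down}$ from the Galois connection, whereas you check the case $b\in X$ by hand.
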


\begin{proof}
We consider a pair $(g,f)\in\cF\cC$ with $g^\up(a_0)\neq \bot_1$ for some attribute $a_0\in A$. We need to prove that $g$ is an extent of a multi-adjoint concept, i.e., $g=g^{\up\down}$. Since the pair of operators $({}^\uparrow,{}^\downarrow)$ forms an antitone Galois connection, it is sufficient to prove that $g^{\up\down}\preceq_2 g$. Now, since $(g,f)\in\cF\cC$, we can decompose the set of objects as $B=B^\bot\cup B^\top$, where $b\in B^\top$ if $g(b)=\top_2$  and  $b\in B^\bot$ if $g(b)=\bot_2$. It is clear that the inequality $g^{\up\down}(b)\preceq_2 g(b)$ holds, for all $b\in B^\top$. 
We will show that the inequality also holds for all $b'\in B^\bot$. 
We can also decompose the set of attributes as $A=A^\bot\cup A^\top$, in the same way that we did with the set of objects with respect to $f$. Now, let us prove that the attribute $a_0$ satisfying that $g^\up(a_0)\neq\bot_1$ belongs to the set $A^\top$. 
We proceed by reductio ad absurdum, we suppose that $g^\up(a_0)\neq\bot_1$ with $a_0\in A^\bot$, then $f(a_0)=\bot_1$. In addition, we can ensure the existence of an object $b_0\in B$ such that $g(b_0)=\top_2$  and, by Lemma~\ref{lem:botrelation}, we obtain that $R(a_0, b_0)=\bot$. Therefore, since $\top_2$ is not a zero-divisor, by Proposition~\ref{prop:atproperties}~(6) and Corollary~\ref{cor:topno0}, we have that $R(a_0,b_0)\dimp^\sigma g(b_0)=\bot\dimp^\sigma \top_2=\bot_1$ and we can claim that
$$
g^\up (a_0) = \inf\{ R(a_0,b)\dimp^{\sigma} g(b)\mid b\in B\} = \bot_1
$$
which is a contradiction. Hence, we state that $g^\up(a_0)\neq\bot_1$ with $a_0\in A^\top$.
Thus, considering any $b'\in B^\bot$, since $b'\in B^\bot$ and $a_0\in A^\top$ applying Lemma~\ref{lem:botrelation}, we obtain that $R(a_0,b')=\bot$. Therefore,
we can assert that $R(a_0,b') \uimp_\sigma g^{\up}(a_0) = \bot \uimp_\sigma g^\up(a_0) = \bot_2$, since $g^\up(a_0)\neq\bot_1$ and the conjunctors have no zero-divisors. Thus, we obtain the following equality
$$
g^{\up\down}({b'}) = \inf\{R(a,b')\uimp_\sigma g^{\up}({a})\mid a\in A\} = \bot_2
$$
Therefore, we can state that $g^{\up\down}(b')\preceq_2 g(b')$ holds, for all $b'\in B^\bot$. 

As a consequence, we have  that $g^{\up\down}\preceq_2 g$, thereby ensuring that $\langle g, g^\up\rangle \in \cM$. The proof of $\langle f^\down, f \rangle\in \mathcal{M}$ is obtained analogously. \qed
\end{proof}

In accordance with the conditions previously outlined, both within the multi-adjoint framework and in the context, it is possible to ascertain the multi-adjoint concepts that constitute the top and bottom elements of the associated multi-adjoint concept lattice, $\cM$.

\begin{lemma}\label{lem:topbotconcepts}
 Given the multi-adjoint frame $(L_1, L_2, P,\adjoint_1,\dots, \adjoint_n)$ and the normalized context $(A,B,R,\sigma)$, it is satisfied that $\fcon{g_\top}{f_\bot}, \fcon{g_\bot}{f_\top}\in \cM$.
\end{lemma}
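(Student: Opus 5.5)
To show that $\fcon{g_\top}{f_\bot}\in\cM$ we need $g_\top^\up=f_\bot$ and $f_\bot^\down=g_\top$. The second equality is immediate from Proposition~\ref{prop:atproperties}~(5): for every $b\in B$ we have $f_\bot^\down(b)=\inf\{R(a,b)\uimp_{\sigma(a,b)}\bot_1\mid a\in A\}=\top_2$. For the first equality, fix $a\in A$ and compute $g_\top^\up(a)=\inf\{R(a,b)\dimp^{\sigma(a,b)}\top_2\mid b\in B\}$. Because the context is normalized, there is an object $b_2\in B$ with $R(a,b_2)=\bot$. Since no conjunctor $\adjoint_i$ has zero-divisors, Corollary~\ref{cor:topno0} says that $x\adjoint_i\top_2=\bot$ forces $x=\bot_1$. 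Combined with Proposition~\ref{prop:atproperties}~(6), this gives
\[
\bot\dimp^{\sigma(a,b_2)}\top_2=\max\{x\in L_1\mid x\adjoint_{\sigma(a,b_2)}\top_2\le\bot\}=\bot_1 .
\]
Hence the infimum defining $g_\top^\up(a)$ is $\bot_1$, and $g_\top^\up=f_\bot$.

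The pair $\fcon{g_\bot}{f_\top}$ is handled dually. First, $g_\bot^\up(a)=\inf\{R(a,b)\dimp^{\sigma(a,b)}\bot_2\mid b\in B\}=\top_1$ by Proposition~\ref{prop:atproperties}~(5). Second, for each $b\in B$ normalization gives an attribute $a_2$ with $R(a_2,b)=\bot$. Then Proposition~\ref{prop:atproperties}~(7) and Corollary~\ref{cor:topno0} give
\[
\bot\uimp_{\sigma(a_2,b)}\top_1=\max\{y\in L_2\mid \top_1\adjoint_{\sigma(a_2,b)} y\le\bot\}=\bot_2 ,
\]
so $f_\top^\down(b)=\bot_2$, that is, $f_\top^\down=g_\bot$.

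The argument is routine. The only step that needs care is checking that the lack of zero-divisors of the conjunctors of the multi-adjoint frame (an assumption fixed at the beginning of Section~\ref{sec:3}) is exactly what forces $\bot\dimp\top_2=\bot_1$ and $\bot\uimp\top_1=\bot_2$. This is where normalization enters: it supplies, for every attribute and every object, a pair with relation value $\bot$.
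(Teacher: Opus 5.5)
Your proof is correct and follows the paper's route. Normalization supplies a $\bot$ entry in every row and column, and the absence of zero-divisors (Corollary~\ref{cor:topno0} with Proposition~\ref{prop:atproperties}~(6)/(7)) forces the corresponding implication to the bottom element; the only differences are that you get $f_\bot^\down=g_\top$ and $g_\bot^\up=f_\top$ directly from Proposition~\ref{prop:atproperties}~(5) rather than from the Galois connection, and you write out the dual case the paper leaves as analogous.
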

\begin{proof}
Since the concept forming operators form a Galois connection, we have that $f_\bot^{\down}=g_\top$. Now, considering any attribute $a\in A$ and applying the concept forming operator to $g_\top$, we have that
$$
g_\top^{\up}(a)= \inf\{R(a,b)\dimp^{\sigma(a,b)}  g_\top(b)\mid b\in B\}= \inf\{R(a,b)\dimp^{\sigma(a,b)}  \top_2\mid b\in B\}
$$
Since the context is normalized, for every $a\in A$ there exists $b_a\in B$ such that $R(a,b_a)=\bot$. Thus, 
$$
R(a,b_a)\dimp^{\sigma(a,b_a)}  \top_2= \bot\dimp^{\sigma(a,b_a)}  \top_2 = \max\{x\in L_1\mid x\adjoint_{\sigma(a,b_a)}  \top_2\leq \bot\}
$$
Due to the fact that $\top_2$ is not a zero-divisor, by Corollary~\ref{cor:topno0}, we can assert that the maximum of the above expression is $\bot_1$. Therefore, we obtain $g_\top^{\up}(a) =\bot_1$, for all $a\in A$. Thus, it is satisfied that  $g_\top^{\up} =f_\bot$ and $f_\bot^{\down}=g_\top$. Consequently, $\fcon{g_\top}{f_\bot}\in \cM$. Analogously, we can obtain that $\fcon{g_\bot}{f_\top}\in \cM$ considering the fuzzy-objects.\qed
\end{proof}

Furthermore, the multi-adjoint concept $\langle g, g^\up\rangle$ of Proposition~\ref{isc:f1} defines a lower-close neighbor of the top element of the concept lattice. In other words, it determines the top concept of the multi-adjoint concept lattice associated with an independent subcontext, as it is stated in the following result. 
 
\begin{proposition}\label{isc:f3}
 Given a pair $(g,f)\in \cF\cC$ with $g^\up \neq f_\bot$, there is no concept $\langle g_0, f_0\rangle$ such that 
 $$
 \langle g, g^\up\rangle \prec \langle g_0, f_0\rangle \prec \langle g_\top, f_\bot\rangle
 $$
 Dually, if $f^\down\neq g_\bot$, then there is no concept $\langle g_0, f_0\rangle$ such that 
 $$
\langle g_\bot, f_\top\rangle  \prec \langle g_0, f_0\rangle \prec \langle f^\down, f\rangle
 $$
\end{proposition}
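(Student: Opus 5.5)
The plan is to argue by contradiction. We assume a concept $\langle g_0,f_0\rangle$ lies strictly between $\langle g,g^\up\rangle$ and $\langle g_\top,f_\bot\rangle$, and show that its intent $f_0$ must be $f_\bot$, which forces $g_0=f_\bot^\down=g_\top$. By Proposition~\ref{isc:f1}, $\langle g,g^\up\rangle\in\cM$, and by Lemma~\ref{lem:topbotconcepts}, $\langle g_\top,f_\bot\rangle\in\cM$, so the statement makes sense. Write $(g,f)=(\bigchi_X,\bigchi_Y)$ with $(X,Y)\in\cC_N$, using Theorem~\ref{th:fncn}. Keep the decompositions $B=B^\top\cup B^\bot$ and $A=A^\top\cup A^\bot$, so that $B^\top=X$ and $A^\top=Y$.

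First I extract the information given by the strict inequalities. Since $g\preceq_2 g_0$, we get $g_0(b)=\top_2$ for every $b\in X$. Since $g\neq g_0$, there is some $b_1\in X^c$ with $g_0(b_1)\neq\bot_2$. Next, suppose $f_0\neq f_\bot$, and pick $a_0\in A$ with $f_0(a_0)\neq\bot_1$. Because $f_0=g_0^\up$, we have $f_0(a_0)\preceq_1 R(a_0,b)\dimp^{\sigma} g_0(b)$ for all $b\in B$.

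The core step is a case split on $a_0$, using Lemma~\ref{lem:botrelation} together with the absence of zero-divisors.
\begin{itemize}
\item If $a_0\in A^\bot$, pick any $b\in X$, which is nonempty. Apply the first item of Lemma~\ref{lem:botrelation} with $g=f^\dN$, $f(a_0)=\bot_1$ and $g(b)=\top_2$. This gives $R(a_0,b)=\bot$. Then $f_0(a_0)\preceq_1 \bot\dimp^\sigma\top_2=\bot_1$ by Corollary~\ref{cor:topno0}, which is a contradiction.
\item If $a_0\in A^\top$, apply the second item of Lemma~\ref{lem:botrelation} with $f=g^\uN$, $f(a_0)=\top_1$ and $g(b_1)=\bot_2$. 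This gives $R(a_0,b_1)=\bot$. Hence $f_0(a_0)\preceq_1 \bot\dimp^\sigma g_0(b_1)=\max\{x\mid x\adjoint g_0(b_1)\le\bot\}$. Since $g_0(b_1)\neq\bot_2$ and the conjunctors have no zero-divisors, this maximum is $\bot_1$, again a contradiction.
\end{itemize}
Therefore $f_0=f_\bot$, so $g_0=f_0^\down=f_\bot^\down=g_\top$. This contradicts $\langle g_0,f_0\rangle\prec\langle g_\top,f_\bot\rangle$.

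The dual statement is proved symmetrically. Suppose $\langle g_\bot,f_\top\rangle\prec\langle g_0,f_0\rangle\prec\langle f^\down,f\rangle$. Then $f\preceq_1 f_0$ with $f\neq f_0$, so $f_0=\top_1$ on $Y$ and $f_0(a_1)\neq\bot_1$ for some $a_1\in Y^c$. Take $b_0$ with $g_0(b_0)\neq\bot_2$, and split into $b_0\in B^\bot$ and $b_0\in B^\top$. In each case Lemma~\ref{lem:botrelation} yields a $\bot$ entry of $R$, and the bound $g_0(b_0)\preceq_2 R(a,b_0)\uimp_\sigma f_0(a)=\bot_2$ follows from having no zero-divisors.

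The main obstacle is the second case: we must extract a bottom entry of $R$ in the right position. The key is to choose the witness $b_1$ coming from the strictness $g\neq g_0$, and not an arbitrary object, so that Lemma~\ref{lem:botrelation} applies to the pair $(a_0,b_1)$.
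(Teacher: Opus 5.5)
Your proof is correct and follows essentially the same route as the paper. You assume an intermediate concept, use Lemma~\ref{lem:botrelation} to obtain $\bot$ entries of $R$ between $A^\top$ and $B^\bot$ (and between $A^\bot$ and $B^\top$), and use the absence of zero-divisors to force $f_0=f_\bot$, hence $g_0=g_\top$. The only cosmetic differences are that the paper kills $f_0$ on $A^\bot$ via $f_0\preceq_1 g^\up$ (with $g^\up$ vanishing on $A^\bot$, as shown in the proof of Proposition~\ref{isc:f1}) and treats $A^\top$ by evaluating $g_0(b_0)=f_0^\down(b_0)$ with $\uimp_\sigma$, whereas you work directly with $f_0=g_0^\up$ and $\dimp^\sigma$, which is equivalent by the adjoint property.
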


\begin{proof}
	We will proceed by reductio ad absurdum. We assume a concept $\langle g_0, f_0\rangle$ exists such that $\langle g, g^\up\rangle \prec \langle g_0, f_0\rangle \prec \langle g_\top, f_\bot\rangle$. 
 {Since $f_0 \prec_1 g^\up$}, we have that $f_0(a) = \bot_1$, for all $a\in A^\bot$. In a similar way, it holds $g_0(b) = \top_2$, for all $b\in B^\top$ since $g\preceq_2 g_0$ and $g(b)=\top_2$, for all $b\in B^\top$. Therefore, there exists at least one object, $b_0\in B^\bot$, such that $g(b_0) = \bot_2 \preceq_2 g_0(b_0)$ and $g_0(b_0)\neq \bot_2$. 
By Lemma~\ref{lem:botrelation}, $R(a,b_0)=\bot$ for all $a\in A^\top$. Moreover, we know that $f_0(a) = \bot_1$, for all $a\in A^\bot$. Therefore,
	\begin{align*}
	 \bot_{2}\neq g_0(b_0) &=f_0^\down (b_0)  \\
	 &= \inf\{R(a,b_0)\uimp_\sigma f_0(a)\mid a\in A\}\\
	&=\inf\{R(a,b_0)\uimp_\sigma f_0(a)\mid a\in A^\top\}\wedge\inf\{R(a,b_0)\uimp_\sigma f_0(a)\mid a\in A^\bot\} \\
	&= \inf\{\bot\uimp_\sigma f_0(a)\mid a\in A^\top\}\wedge\inf\{R(a,b_0)\uimp_\sigma \bot_1\mid a\in A^\bot \}\\
	&=\inf\{\bot\uimp_\sigma f_0(a)\mid a\in A^\top\}\wedge \top_2\\
	&=\inf\{\bot\uimp_\sigma f_0(a)\mid a\in A^\top\}
	\end{align*}
    Consequently,
	\[g_0(b_0) = \inf\{\bot\uimp_\sigma f_0(a)\mid a\in A^\top\} \neq \bot_{2}\]
	Hence, if $f_0(a)\neq\bot_1$ for some $a\in A^\top$, then $\bot\uimp_\sigma f_0(a) = \bot_2$ and $g_0(b_0)= \bot_2$, which is a contradiction. Therefore, $f_0(a)=\bot_1$, for all $a\in A^\top$ and we have that $g_0(b_0) = \top_2$. Specifically,  $g_0(b) = \top_2$, for all $b\in B^\bot$.
	
	Finally, since $g_0(b) = \top_2$, for all $b\in B^\top$ and $g_0(b) = \top_2$, for all $b\in B^\bot$, we can conclude that $g_0 = g_\top$. In consequence, $\langle g_0, f_0\rangle = \langle g_\top, f_\bot\rangle$, which contradicts the hypothesis.
    The proof for the concept  $\langle f^\down, f \rangle$ can be obtained by duality. \qed
\end{proof}

In addition, for a pair $(g,f)\in\cF\cC$, when we have that both pairs $\langle g, g^\up \rangle$  and $\langle f^\down, f \rangle$
belong to $\cM$, we can state that every multi-adjoint concept associated with the independent subcontext determined by $(g,f)\in\cF\cC$ is between these two multi-adjoint concepts.
 
\begin{proposition}\label{isc:f4}
  Given a pair $(g,f)\in\cF\cC$ with $g^\up\neq f_\bot$ and $f^\down\neq g_\bot$, then the inequality $\langle f^\down, f \rangle\preceq\langle g, g^\up\rangle$ holds. 	
 \end{proposition}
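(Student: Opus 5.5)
By Proposition~\ref{isc:f1}, the hypotheses $g^\up\neq f_\bot$ and $f^\down\neq g_\bot$ guarantee that both $\langle g, g^\up\rangle$ and $\langle f^\down, f\rangle$ belong to $\cM$. Since the order of $\cM$ is given by the order of the extents, it suffices to prove $f^\down \preceq_2 g$. Equivalently, since $g=g^{\up\down}$ by Proposition~\ref{isc:f1}, we could show $g^\up\preceq_1 f$ and use the antitonicity of ${}^\down$. We will work directly with extents.

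Write $(g,f)=(\bigchi_X,\bigchi_Y)$ and use the partitions $B=B^\top\cup B^\bot$ and $A=A^\top\cup A^\top{}^c$ as in the proof of Proposition~\ref{isc:f1}, so that $B^\top=X$ and $A^\top=Y$. On $B^\top$ the inequality $f^\down(b)\preceq_2 g(b)=\top_2$ is trivial. It remains to show $f^\down(b')=\bot_2$ for every $b'\in B^\bot$, where
$$f^\down(b')=\inf\{R(a,b')\uimp_\sigma f(a)\mid a\in A\}.$$

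Fix $b'\in B^\bot$. Because the context is normalized, there is $a'\in A$ with $R(a',b')\neq\bot$. By Lemma~\ref{lem:botrelation}, applied with $f=g^\uN$, $f(a)=\top_1$ and $g(b')=\bot_2$, we get $R(a,b')=\bot$ for all $a\in A^\top$. Hence $a'\in A^\bot$, and so $f(a')=\bot_1$. Then
$$R(a',b')\uimp_\sigma \bot_1=\max\{y\in L_2\mid \bot_1\adjoint y\le R(a',b')\},$$
which at first sight is $\top_2$ by Proposition~\ref{prop:atproperties}~(3). This computation has the roles reversed: in $f^\down$ the term is $R(a,b)\uimp_\sigma f(a)$ with $\uimp$ of type $P\times L_1\to L_2$, so it equals $\max\{y\mid f(a)\adjoint y\le R(a,b)\}$. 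With $f(a')=\bot_1$ this maximum is $\top_2$, so the attribute $a'$ gives no information and the argument must be redirected.

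Instead we use attributes $a\in A^\top$, for which $f(a)=\top_1$ and $R(a,b')=\bot$. For such $a$,
$$R(a,b')\uimp_\sigma f(a)=\max\{y\mid \top_1\adjoint y\le\bot\}=\bot_2$$
by Corollary~\ref{cor:topno0}, since the conjunctors have no zero-divisors. Because $Y=A^\top\neq\varnothing$ (as $(g,f)\in\cF\cC$), such an attribute exists. Therefore $f^\down(b')=\bot_2=g(b')$ for all $b'\in B^\bot$.

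Combining both cases gives $f^\down\preceq_2 g$, hence $\langle f^\down,f\rangle\preceq\langle g,g^\up\rangle$. The only delicate point is the direction of the implication in $f^\down$ together with the correct use of Corollary~\ref{cor:topno0}. The conclusion does not use the hypothesis $g^\up\neq f_\bot$ except through Proposition~\ref{isc:f1}, which is what makes $\langle g, g^\up\rangle$ a concept.
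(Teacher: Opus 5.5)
Your proof is correct and follows the paper's argument: you reduce to $f^\down\preceq_2 g$ via Proposition~\ref{isc:f1}, dispose of $B^\top$ trivially, and for $b'\in B^\bot$ you use Lemma~\ref{lem:botrelation} to get $R(a,b')=\bot$ on $A^\top$, so that $\bot\uimp_\sigma\top_1=\bot_2$ by Corollary~\ref{cor:topno0}. The detour through an attribute $a'$ with $R(a',b')\neq\bot$ is unnecessary, and nothing was actually ``reversed'' there since the value $\top_2$ you first computed was right; on the other hand, your explicit remark that $A^\top\neq\varnothing$ makes precise a point the paper leaves implicit.
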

\begin{proof}
Let us consider a pair $(g,f)\in\cF\cC$ with $g^\up\neq f_\bot$ and $f^\down\neq g_\bot$, and an arbitrary object $b'\in B$ in order to prove that $f^\down(b') \preceq_2 g(b')$. By Proposition~\ref{prop:partition} and Corollary~\ref{cor:compl}, we obtain partitions of the sets of object and attributes from the pair $(g,f)$, that is, $B = B^\top\cup B^\bot$ and $A = A^\top\cup A^\bot$. Therefore, two cases can be distinguished: if $b'\in B^\top$, then it is clear that $f^\down(b') \preceq_2 \top_2 = g(b')$. Otherwise, we can decompose $f^\down(b')$ using the partition of $A$ and by Proposition~\ref{prop:atproperties} (5), we have that
    \begin{align*}
         f^\down(b') =&  \inf\{ R(a,b')\uimp_{\sigma} f(a)\mid a\in A\}\\
                     =& \inf\{R(a,b')\uimp_{\sigma} \top_1\mid a\in A^\top\}\wedge\inf\{R(a,b')\uimp_{\sigma} \bot_1\mid a\in A^\bot\}\\
                     =&\inf\{R(a,b')\uimp_{\sigma} \top_1\mid a\in A^\top\}\wedge \top_1\\
                     =&\inf\{R(a,b')\uimp_{\sigma} \top_1\mid a\in A^\top\}
    \end{align*}
    Therefore, since $f(a)=\top_1$, for all $a\in A^\top$, and $g(b')=\bot_2$, by Lemma~\ref{lem:botrelation}, we have that $R(a,b')=\bot$, for all $a\in A^\top$. Thus, since $\top_1$ is not a zero-divisor by Corollary~\ref{cor:topno0}, it is satisfied that 
    $$
     f^\down(b') =\inf\{\bot\uimp_{\sigma} \top_1\mid a\in A^\top\} = \bot_2
    $$
    Hence, $f^\down(b') = g(b') = \bot_2$. Consequently, since $b'$ is an arbitrary object, we have proved that $f^\down \preceq_2 g$, and therefore, since by  Proposition~\ref{isc:f1} 
we have that   $\langle f^\down, f \rangle$ and $\langle g, g^\up\rangle$ are concepts, we can write that  $\langle f^\down, f \rangle\preceq\langle g, g^\up\rangle$.
    \qed
\end{proof}

The following example illustrates the different results previously shown.

\begin{example}\label{ex:is_prop}	
The aforementioned results can be illustrated in accordance with the setting that has been established in Example~\ref{ex:FN_CN}. Thus, considering the context $(A,B,R,\sigma)$ of Example~\ref{ex:FN_CN}, we can compute the concepts and its associated multi-adjoint concept lattice, which are both depicted in Figure~\ref{fig:prop_latt}.
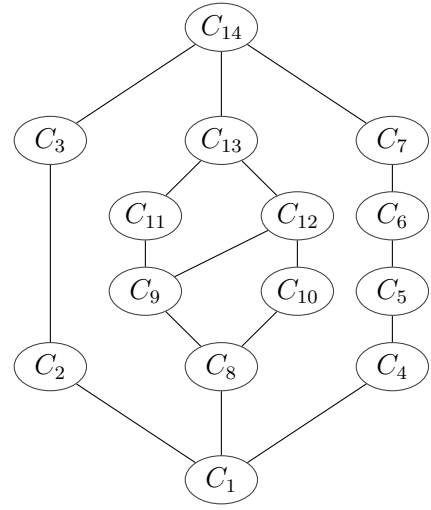
\begin{figure}[ht]
   \begin{minipage}{0.3\textwidth}
        \begin{align*}
            C_1 &= \fcon{\{\}}{\{a_1/1.0, a_2/1.0, a_3/1.0, a_4/1.0\}}\\
            C_2 &= \fcon{\{b_2/0.5\}}{\{a_4/1.0\}} \\
            C_3 &=\fcon{\{b_2/1.0\}}{\{a_4/0.4\}}\\
            C_4 &= \fcon{\{b_4/0.25\}}{\{a_3/1.0\}}\\
            C_5 &= \fcon{\{b_4/0.5\}}{\{a_3/0.6\}}\\
            C_6 &= \fcon{\{b_4/0.75\}}{\{a_3/0.4\}}\\
            C_7 &= \fcon{\{b_4/1.0\}}{\{a_3/0.2\}}\\
            C_8 &= \fcon{\{b_1/0.5, b_3/0.5\}}{\{a_1/1.0, a_2/1.0\}}\\
            C_9 &=\fcon{\{b_1/0.75, b_3/0.5\}}{\{a_1/1.0, a_2/0.8\}}\\
            C_{10} &= \fcon{\{b_1/0.5, b_3/0.75\}}{\{a_1/0.6, a_2/1.0\}}\\
            C_{11} &= \fcon{\{b_1/1.0, b_3/0.5\}}{\{a_1/1.0, a_2/0.6\}}\\
            C_{12} &= \fcon{\{b_1/0.75, b_3/1.0\}}{\{a_1/0.6, a_2/0.8\}}\\
            C_{13} &= \fcon{\{b_1/1.0, b_3/1.0\}}{\{a_1/0.6, a_2/0.6\}}\\
            C_{14} &= \fcon{\{b_1/1.0, b_2/1.0, b_3/1.0, b_4/1.0\}}{\{\}}
        \end{align*}
    \end{minipage}
    \begin{minipage}{0.5\textwidth}
        \begin{center}  
        \tikzstyle{place}=[ellipse, align=center,minimum width=30pt, draw=black!75,fill=white!20, text width= 17pt]
        \begin{tikzpicture}[inner sep=0.75mm,scale=1.0, every node/.style={scale=0.9}]			
            \node at (0,-0.5) (0) [place] {$C_1$};
            \node at (-2.25,1) (1) [place] {$C_2$};
            \node at (-2.25,4) (2) [place] {$C_3$};
            \node at (2.25,1) (3) [place] {$C_4$};
            \node at (2.25,2) (4) [place] {$C_5$};
            \node at (2.25,3) (5) [place] {$C_6$};
            \node at (2.25,4) (6) [place] {$C_7$};
            \node at (0,1) (7) [place] {$C_8$};
            \node at (-1,2) (8) [place] {$C_9$};
            \node at (1,2) (9) [place] {$C_{10}$};
            \node at (-1,3) (10) [place] {$C_{11}$};
            \node at (1,3) (11) [place] {$C_{12}$};
            \node at (0,4) (12) [place] {$C_{13}$};
            \node at (0,5.5) (13) [place] {$C_{14}$};
        
            \draw [-] (0) -- (1)--(2)--(13);
            \draw [-] (0) -- (3)--(4)--(5)--(6)--(13);
            \draw [-] (0) -- (7)--(8)--(10)--(12)--(13);
            \draw [-] (8)--(11);
            \draw [-] (7)--(9)--(11)--(12);
        \end{tikzpicture}
        \end{center}
    \end{minipage}
\caption{List of multi-adjoint concepts of the context $(A,B,R,\sigma)$ and its associated multi-adjoint concept lattice $\cM$ in Example~\ref{ex:is_prop}.}
\label{fig:prop_latt}
\end{figure}

Recall that all the elements of the set $\cF\cC$ were listed in Example~\ref{ex:FN_CN_th} and we are considering the pair $(g_7,f_7)\in\cF\cC\subset\cF_N$, where 
$$(g_7,f_7) = (\{b_1/1.0, b_3/1.0\}, \{a_1/1.0,a_2/1.0\})$$
It is easy to check that $g_7^\up = \{a_1/0.6, a_2/0.6\}$, and therefore, we are under the hypothesis of Proposition~\ref{isc:f1}, that is, $g_7^\up(a_2) \neq \bot_{1}$. Thus, by Proposition~\ref{isc:f1}, $\langle g_7, g_7^\up\rangle$ is a multi-adjoint concept. Indeed, it is the concept $C_{13}$ listed in Figure~\ref{fig:prop_latt} and, moreover, it is a lower-close neighbor of the top element of the concept lattice, as Proposition~\ref{isc:f3} states. Furthermore, $\langle f_7^\down, f_7\rangle$ is also a multi-adjoint concept, namely the concept $C_8$, and therefore, by Proposition~\ref{isc:f4}, an interval of concepts is determined from this two concepts, i.e., $[C_8, C_{13}]= [\fcon{f_7^\down}{f_7}, \fcon{g_7}{g_7^\up}] = \{\fcon{g}{f}\in\cM\mid \fcon{f_7^\down}{f_7}\} \preceq \fcon{g}{f}\preceq \fcon{g_7}{g_7^\up}\}$. A similar outcome is yielded when the pairs  $(g_2,f_2)$ and $(g_5,f_5)$ are taken into consideration.\qed
\end{example}

Therefore, when the considered context  contains independent subcontexts, the previous results allow us to know the top and bottom elements of the concept lattice associated with an independent subcontext.

\begin{corollary}\label{cor:topbot}
 Given a pair $(g,f)\in\cF\cC$ with $g^\up\neq f_\bot$ and $f^\down\neq g_\bot$,
then the concepts $\langle g, g^\up\rangle$ and $\langle f^\down, f \rangle$ determine the top and bottom element, respectively, of the concept lattice associated with an independent subcontext. 
 \end{corollary}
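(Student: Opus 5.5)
We need to show that $\langle g, g^\up\rangle$ and $\langle f^\down, f\rangle$ are, respectively, the top and bottom elements of the concept lattice of the independent subcontext determined by $(g,f)=(\bigchi_X,\bigchi_Y)$. The plan is to identify that concept lattice with an interval of $\cM$.

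By Corollary~\ref{cor:decomp}, the pair determines the independent subcontext $(Y,X,R_{Y\times X},\sigma_{Y\times X})$. Write $\cM_{X,Y}$ for its multi-adjoint concept lattice, with derivation operators ${}^{\up_Y}$ and ${}^{\down_X}$.

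\textbf{Step 1: an embedding of $\cM_{X,Y}$ into $\cM$.} Every concept $\langle g', f'\rangle\in\cM_{X,Y}$ is extended by zero to $\langle \tilde g', \tilde f'\rangle$, where $\tilde g'(b)=\bot_2$ for $b\in X^c$ and $\tilde f'(a)=\bot_1$ for $a\in Y^c$. I then check that $\tilde g'^{\up}=\tilde f'$ by splitting the infimum over $X$ and $X^c$:
\begin{itemize}
\item For $a\in Y$ and $b\in X^c$ we have $R(a,b)=\bot$ (separability), and $\bot\dimp \bot_2=\top_1$ by Proposition~\ref{prop:atproperties}~(5). So only the $X$-part survives, giving $f'(a)$.
\item For $a\in Y^c$, the normalization argument of Proposition~\ref{isc:f1} gives some $b\in X^c$ with $R(a,b)\neq\bot$. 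Hence $R(a,b)\dimp\bot_2=\bot_1$, because the conjunctors have no zero-divisors.
\end{itemize}
\textbf{Main obstacle.} The second bullet really requires $R(a,b)\dimp \bot_2$ to be $\bot_1$, and with no zero-divisors this only holds when $R(a,b)\neq\bot$. This is also the point where the degenerate cases (an empty extent in the subcontext) appear. I would treat them separately, or restrict to the nondegenerate concepts, which is exactly what the hypotheses $g^\up\neq f_\bot$ and $f^\down\neq g_\bot$ ensure.

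\textbf{Step 2: the extremes of the subcontext lattice.} The top of $\cM_{X,Y}$ has extent $\top$ on $X$. Its zero-extension is $\bigchi_X=g$, so by Step 1 its image is $\langle g,g^\up\rangle$, which lies in $\cM$ by Proposition~\ref{isc:f1}. Dually, the bottom of $\cM_{X,Y}$ has intent $\top$ on $Y$, and its extension is $\langle f^\down,f\rangle$.

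\textbf{Step 3: conclusion.} The extension map is order-preserving, so every image lies between these two concepts. Proposition~\ref{isc:f4} gives $\langle f^\down,f\rangle\preceq\langle g,g^\up\rangle$, and Proposition~\ref{isc:f3} shows these concepts are the neighbours of the bounds of $\cM$. Together these identify them as the top and bottom of the sublattice corresponding to the independent subcontext.
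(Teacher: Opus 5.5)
\textbf{The second bullet of your Step~1 is false.} In the concept-forming operator the relation value is the \emph{first} argument of the implication: $(\tilde g')^{\up}(a)=\inf\{R(a,b)\dimp\tilde g'(b)\mid b\in B\}$. Proposition~\ref{prop:atproperties}~(5) gives $z\dimp\bot_2=\top_1$ for \emph{every} $z\in P$, so $R(a,b)\dimp\bot_2=\top_1$ also when $R(a,b)\neq\bot$. Your first bullet uses exactly this fact with $z=\bot$. The identity you invoke, $\bot_2\dimp^{\sigma_\oo}R(a,b)=\bot_1$ for $R(a,b)\neq\bot$, concerns the necessity operator ${}^{\uN}$, whose implication takes its arguments in the opposite order. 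That is the argument of Proposition~\ref{prop:indepsubc_FN}, not of Proposition~\ref{isc:f1}. Hence, for $a\in Y^c$, every term indexed by $X^c$ equals $\top_1$, and normalization is irrelevant. The value $\bot_1$ must come from the terms indexed by $X$. There $R(a,b)=\bot$ by Lemma~\ref{lem:botrelation}, and $\bot\dimp g'(b)=\bot_1$ exactly when $g'(b)\neq\bot_2$, because the conjunctors have no zero-divisors; this is the mechanism the proof of Proposition~\ref{isc:f1} actually uses. Thus $(\tilde g')^{\up}=\tilde f'$ holds if and only if $g'$ is not identically $\bot_2$ on $X$. Your version would leave no degenerate case on this side at all, which contradicts your own remark about empty extents.

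Two further pieces are missing.
\begin{enumerate}
\item You never check the dual identity $(\tilde f')^{\down}=\tilde g'$. On $X$ it reduces to $(f')^{\down_X}=g'$, because $R(a,b)\uimp\bot_1=\top_2$. For $b\in X^c$ it yields $\bot_2$ if and only if $f'$ is not identically $\bot_1$ on $Y$.
\item Your claim that the hypotheses ``exactly ensure'' nondegeneracy needs justification. The map $g^{\up}$ is $\bot_1$ on $Y^c$ and agrees on $Y$ with the intent of the top concept of $\cM_{X,Y}$. So $g^{\up}\neq f_\bot$ says that the smallest intent in $\cM_{X,Y}$ is nonzero, hence every intent is nonzero. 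Dually, $f^{\down}\neq g_\bot$ makes every extent nonzero.
\end{enumerate}

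With these repairs, your zero-extension is an order embedding of $\cM_{X,Y}$ into $\cM$ sending its top and bottom to $\langle g,g^{\up}\rangle$ and $\langle f^{\down},f\rangle$. That makes precise what the paper's one-line proof only asserts by combining Theorem~\ref{th:diis_ma_cl} with Propositions~\ref{isc:f1}, \ref{isc:f3} and~\ref{isc:f4}. In your route, Proposition~\ref{isc:f3} plays no role.
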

 \begin{proof}
 Straightforwardly from Theorem~\ref{th:diis_ma_cl} , Proposition~\ref{isc:f1}, Proposition~\ref{isc:f3} and Proposition~\ref{isc:f4}.\qed
 \end{proof}

\section{A procedure to decompose a context}\label{sec:5}

It is possible that the Boolean context associated with a context does not  contain independent subcontexts. In this case, it would be interesting to study the possibility of  finding independent subcontexts by making some modifications on the fuzzy relation of the original context. In this section, we provide a three-step procedure to know if a context, $(A,B,R,\sigma)$, whose associated Boolean context does not  contain independent subcontexts, can be modified in order to contain independent subcontexts. This procedure is based on the use of thresholds which ``remove'' (weak) relations of the fuzzy relation $R$ of the context. The steps to follow in the procedure are detailed below:

\begin{enumerate}[label=\textbf{Step \arabic*:}, leftmargin=*] 

\item Fix the largest possible value  $\alpha\in P$ such that the relations $R_\alpha\colon A\times B\to P$, defined for all $a\in A$ and $b\in B$  as:

\[R_\alpha(a,b) = \left\{\begin{array}{ll}
	R(a,b) & \mbox{ if } \alpha\leq R(a,b) \\
	\bot & \mbox{ otherwise}  
\end{array}\right.\]
preserves the context normalized, that is, satisfies that the relationship $R_\alpha$ has neither rows nor columns with all the values equal to $\bot$.
\item Build the associated Boolean context of the context $(A,B,R_\alpha,\sigma)$. 

\item Compute all possible pairs of the set  $\cF\cC$ through the context $(A, B, R_\alpha^\B)$ since, according to Theorem~\ref{th:diis_ma_cl}, when independent subcontexts in the Boolean context can be found, then the context $(A,B,R_\alpha,\sigma)$ also contains independent subcontexts. 
\end{enumerate}

Notice that, since the considered relation $R$ in the multi-adjoint framework has neither rows nor columns full with values $\bot$ {and the set $P$ has a finite domain}, the fixed value $\alpha$ will be always different from $\bot$.

It is also convenient to highlight that the way in which the value of $\alpha$ has been fixed in the step one of the previous procedure, guarantees that it is considered the maximun possible value of $\alpha$ which can provide a Boolean context with independent subcontexts.
This is due to, in such a case,  the context $(A,B,R_\alpha,\sigma)$ would contain either rows or columns with all values equal to $\bot$.
	
While it is true that this largest possible value of alpha, is the value that causes  a largest number of changes in the original context, it is important to start from this $\alpha$ because of  if the  obtained context  $(A,B,R_\alpha,\sigma)$  does not contain independent subcontexts, then it will not be possible to find independent subcontexts to any smaller  value of alpha.

Moreover, when the obtained context  $(A,B,R_\alpha,\sigma)$  has independent subcontexts, it is possible to try to reduce the impact of the considered  value~$\alpha$, taking a smaller value   and repeating the procedure for this new  value.

In the following example we show the  application of the previous procedure.	
\begin{example}\label{ex:proc_facto}
Let us consider the multi-adjoint frame $(L_1,L_2,P,\adjoint^*_\G,\adjoint^*_\Pp)$, the property-oriented multi-adjoint frame $(L_1,L_2,P,\adjoint^*_\G)$ and the object-oriented multi-adjoint frame $(L_1,L_2,P,\adjoint^*_\Pp)$, where $L_1 = L_2 = P = [0,1]_4$.
We consider the context $(A, B, R,\sigma)$ given in Table~\ref{tab:r_original}. This context cannot be decomposed into independent subcontexts according to Definition~\ref{def:decomp_indep} since the objects $b_1$ and $b_3$ are related to all attributes. In other words, the context is not normalized.
	\begin{table}[ht]
    \centering
        \begin{minipage}{0.45\textwidth}
            \centering
            \begin{tabular}{|c|c c c c|}
            \hline
				$R$ & $b_1$ &$b_2$&$b_3$&$b_4$\\ \hline
				$a_1$&1&0.5&0.25&0\\
				$a_2$&0.5&0.75&0.25&0\\
				$a_3$&0.25&0&0.75&0.5\\
				$a_4$&0.25&0&0.75&0.25\\
				$a_5$&0.25&0.25&0.5&1\\
				\hline
        \end{tabular}
        \end{minipage}
        \begin{minipage}{0.45\textwidth}
        \centering
            \begin{tabular}{|c|c c c c|}
                     \hline
                     $\sigma$ & $b_1$ & $b_2$ & $b_3$ & $b_4$  \\
                     \hline
                     $a_1$ & $\adjoint^*_\G$ & $\adjoint^*_\G$ & $\adjoint^*_\G$ & $\adjoint^*_\G$ \\
                     $a_2$ & $\adjoint^*_\Pp$ & $\adjoint^*_\Pp$ & $\adjoint^*_\Pp$ & $\adjoint^*_\Pp$ \\
                     $a_3$ & $\adjoint^*_\G$ & $\adjoint^*_\G$ & $\adjoint^*_\G$ & $\adjoint^*_\G$ \\
                     $a_4$ & $\adjoint^*_\Pp$ & $\adjoint^*_\Pp$ & $\adjoint^*_\Pp$ & $\adjoint^*_\Pp$ \\
                     $a_5$ & $\adjoint^*_\G$ & $\adjoint^*_\G$ & $\adjoint^*_\G$ & $\adjoint^*_\G$ \\
                     \hline
                \end{tabular}
        \end{minipage}		
		\caption{Fuzzy relation $R$ and the mapping $\sigma$ of the context $(A,B,R,\sigma)$ in Example~\ref{ex:proc_facto}.}\label{tab:r_original}
	\end{table}	

Therefore, we can apply the procedure to find a new formal context with independent subcontexts.
\begin{enumerate}

\item In this case, we set the value $\alpha=0.75$ which is the maximum value in $R$ that does not make either rows or columns with all values equal to $0$. Thus, we obtain the fuzzy relation $R_{0.75}$  given on the left side in Table~\ref{tab:R075}.

	\begin{table}[ht]
	\begin{minipage}{0.55\textwidth}
		\begin{center}
			\begin{tabular}{|c|cccc|}
				\hline
				$R_{0.75}$ & $b_1$ &$b_2$&$b_3$&$b_4$\\ \hline
				$a_1$&1&0&0&0\\
				$a_2$&0&0.75&0&0\\
				$a_3$&0&0&0.75&0\\
				$a_4$&0&0&0.75&0\\
				$a_5$&0&0&0&1\\
				\hline
			\end{tabular}
		\end{center}
	\end{minipage}
	\begin{minipage}{0.5\textwidth}
		\begin{center}
			\begin{tabular}{|c|cccc|}
				\hline
				$R_{0.75}^\B$ & $b_1$ &$b_2$&$b_3$&$b_4$\\ \hline
				$a_1$&1&0&0&0\\
				$a_2$&0&1&0&0\\
				$a_3$&0&0&1&0\\
				$a_4$&0&0&1&0\\
				$a_5$&0&0&0&1\\
                \hline
			\end{tabular}
		\end{center}
	\end{minipage}
	\vspace*{1em}
	\caption{Fuzzy relation $R_{0.75}$  and its associated Boolean relation $R_{0.75}^{\B}$ in Example~\ref{ex:proc_facto}}\label{tab:R075}
\end{table}

\item Once we have the fuzzy relation $R_{0.75}$, we compute its associated Boolean relation $R_{0.75}^{\B}$ given on the right side of Table~\ref{tab:R075}. 

\item The  elements of the set  $\cF\cC$ are derived from the context $(A,B,R_{0.75}^{\B})$, which gives rise to the following pairs:

\begin{minipage}{0.25\textwidth}
  \begin{align*}
	 (X_1, Y_1)&=(\{b_1\},\{a_1\})\\
	 (X_2,Y_2)&=(\{b_2\},\{a_2\})\\
	 (X_3,Y_3)&=(\{b_3\},\{a_3,a_4\})\\
	 (X_4,Y_4)&=(\{b_4\},\{a_5\})\\
     (X_5, Y_5)&=(\{b_1,b_2\},\{a_1,a_2\})\\
     (X_6, Y_6)&=(\{b_1,b_3\},\{a_1,a_3,a_4\})\\
     (X_7, Y_7)&=(\{b_1,b_4\},\{a_1,a_5\})\\
    \end{align*}  
\end{minipage}
\begin{minipage}{0.3\textwidth}
  \begin{align*}
	 (X_8, Y_8)&=(\{b_2,b_3\},\{a_2,a_3,a_4\})\\
	 (X_9,Y_9)&=(\{b_2,b_4\},\{a_2,a_5\})\\
	 (X_{10},Y_{10})&=(\{b_3,b_4\},\{a_3,a_4,a_5\})\\
	 (X_{11},Y_{11})&=(\{b_1,b_2,b_3\},\{a_1,a_2,a_3,a_4\})\\
     (X_{12}, Y_{12})&=(\{b_1,b_2,b_4\},\{a_1,a_2,a_5\})\\
     (X_{13}, Y_{13})&=(\{b_1,b_3,b_4\},\{a_1,a_3,a_4,a_5\})\\
     (X_{14}, Y_{14})&=(\{b_2,b_3,b_4\},\{a_2,a_3,a_4,a_5\})\\
\end{align*}   
\end{minipage}

	As a consequence, in this case, we can find fourteen  distinct independent subcontexts. Several of these contexts satisfy the hypotheses of Proposition~\ref{isc:f4}, as it can be  observed in the concept lattice in Figure~\ref{cl:ex_pro1}. For example, $(X_2,Y_2)$ determines the interval given by $C_3$ and $C_4$, that is,
    $$
   \langle (\bigchi_Y)^\down, \bigchi_Y \rangle=C_3\preceq C_4=\langle \bigchi_X, (\bigchi_X)^\up\rangle 
    $$  	
		
	{	\renewcommand{\arraystretch}{1.5}
	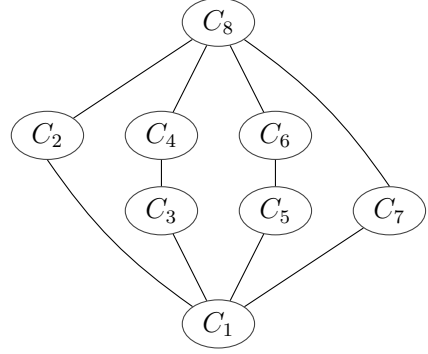
\begin{figure}[ht]
		\begin{minipage}{0.55\textwidth}
		
				\begin{tabular}{l}
						$C_1 =\langle\{\}, \{a_1/1.0 , a_2/1.0 , a_3/1.0 , a_4/1.0 , a_5/1.0\}\rangle$\\
						$C_2 = \langle\{b_1/1.0\} , \{a_1/1.0\}\rangle$\\
						$C_3 = \langle\{b_2/0.75 \} , \{a_2/1.0\}\rangle$\\
						$C_4 = \langle\{b_2/1.0\} , \{a_2/0.75\}\rangle$\\ 
						$C_5 = \langle\{b_3/0.75\} , \{a_3/1.0 , a_4/1.0\}\rangle$\\ 
						$C_6 = \langle\{b_3/1.0\} , \{a_3/0.75, a_4/0.75\}\rangle$\\
						$C_7 = \langle\{b_4/1.0\} , \{a_5/1.0\}\rangle$\\
                        $C_8 =\langle\{b_1/1.0 , b_2/1.0, b_3/1.0, b_4/1.0\} , \{ \}\rangle$
					\end{tabular}
			\end{minipage}
		\begin{minipage}{0.3\textwidth}
        \begin{center}  
        \tikzstyle{place}=[ellipse, align=center,minimum width=30pt, draw=black!75,fill=white!20, text width= 17pt]
        \begin{tikzpicture}[inner sep=0.75mm,scale=1.0, every node/.style={scale=0.9}]			
            \node at (0,-0.5) (0) [place] {$C_1$};
            \node at (-2.25,2) (1) [place] {$C_2$};
            \node at (-0.75,1) (2) [place] {$C_3$};
            \node at (-0.75,2) (3) [place] {$C_4$};
            \node at (0.75,1) (4) [place] {$C_5$};
            \node at (0.75,2) (5) [place] {$C_6$};
            \node at (2.25,1) (6) [place] {$C_7$};
            \node at (0,3.5) (7) [place] {$C_8$};
        
            \draw [-] (1)--(7)--(3)--(2)--(0)--(4)--(5)--(7);
            \draw [-] (0)--(6);
            \draw [-] (0.north west) to[bend left=10] (1.south);
            \draw [-] (6.north) to[bend right=10] (7.south east);
        \end{tikzpicture}
        \end{center}
    \end{minipage}
	\caption{Concepts and the multi-adjoint concept lattice of the context $(A,B,R_{0.75},\sigma)$.}\label{cl:ex_pro1}
	\end{figure}}
\end{enumerate} 

As we can observe from Table~\ref{tab:R075} and Figure~\ref{cl:ex_pro1}, the consideration of the value $\alpha=0.75$ has confirms the possibility of obtaining a decomposition of the original context, but also has entailed a considerable number of modifications of the original context and an important reduction of the size of the original concept lattice (which  has 33 concepts). Hence, we can try to reduce this impact, considering a new and smaller  value of $\alpha$. In this case, we can take $\alpha = 0.5$ and repeat the procedure. For the context $(A,B,R_{0.5},\sigma)$ whose relation is given on the left side of Table~\ref{ex:alpha2}, we consider its associated Boolean context $(A, B, R_{0.5}^\B)$, where its Boolean relation is given on the right side of Table~\ref{ex:alpha2}.

	\begin{table}[ht]
	\begin{minipage}{0.55\textwidth}
		\begin{center}
			\begin{tabular}{|c|cccc|}
				\hline
				$R_{0.5}$ & $b_1$ &$b_2$&$b_3$&$b_4$\\ \hline
				$a_1$&1&0.5&0&0\\
				$a_2$&0.5&0.75&0&0\\
				$a_3$&0&0&0.75&0.5\\
				$a_4$&0&0&0.75&0\\
				$a_5$&0&0&0.5&1\\
				\hline
			\end{tabular}
		\end{center}
	\end{minipage}
	\begin{minipage}{0.5\textwidth}
		\begin{center}
			\begin{tabular}{|c|cccc|}
				\hline
				$R_{0.5}^\B$ & $b_1$ &$b_2$&$b_3$&$b_4$\\ \hline
				$a_1$&1&1&0&0\\
				$a_2$&1&1&0&0\\
				$a_3$&0&0&1&1\\
				$a_4$&0&0&1&0\\
				$a_5$&0&0&1&1\\
				\hline
			\end{tabular}
		\end{center}
	\end{minipage}
	\vspace*{1em}
	\caption{Fuzzy relation $R_{0.5}$ (left) and its associated Boolean relation $R_{0.5}^{\B}$ (right).}\label{ex:alpha2}
\end{table}	
For this smaller value of $\alpha$, we obtain less elements of the set $\cF\cC$ from this Boolean context, that means we find less independent subcontexts, as it can be observed  in Figure~\ref{cl:ex_pro2}. In this case, the elements of $\cF\cC$ are obtained from the following pairs:

\begin{itemize}
	\item $(X'_1,Y'_1)=(\{b_1,b_2\},\{a_1,a_2\})$.
	\item $(X'_2,Y'_2)=(\{b_3,b_4\},\{a_3,a_4,a_5\})$.
\end{itemize}
	{	\renewcommand{\arraystretch}{1.5}
	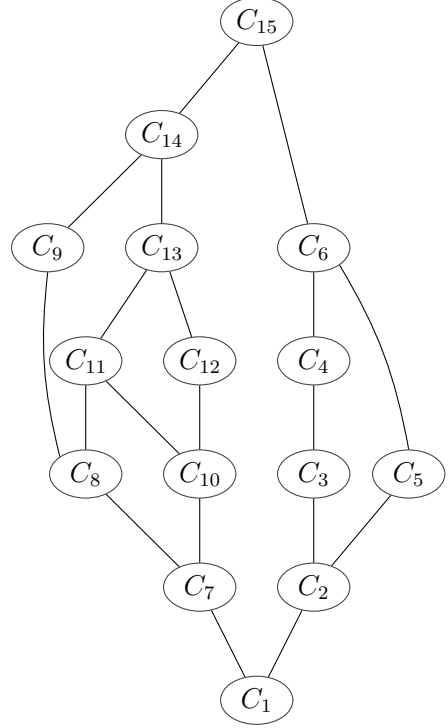
\begin{figure}[ht]
		\begin{minipage}{0.6\textwidth}
			\hspace*{-0.5em}
			\begin{tabular}{l}
				$C_1 =\langle\{\}, \{a_1/1.0 , a_2/1.0 , a_3/1.0 , a_4/1.0 , a_5/1.0\}\rangle$\\
				$C_2 = \langle\{b_1/0.5 , b_2/0.5 \} , \{a_1/1.0 , a_2/1.0\}\rangle$\\
				$C_3 = \langle\{b_1/0.5 , b_2/0.75 \} , \{a_1/0.5 , a_2/1.0\}\rangle$\\
                $C_4 = \langle\{b_1/0.5 , b_2/1.0 \} , \{a_1/0.5 , a_2/0.75\}\rangle$\\
                $C_5 = \langle\{b_1/1.0 , b_2/0.5 \} , \{a_1/1.0 , a_2/0.5\}\rangle$\\
                $C_6 = \langle\{b_1/1.0 , b_2/1.0 \} , \{a_1/0.5 , a_2/0.5\}\rangle$\\
				$C_7 = \langle\{b_3/0.5 \} , \{a_3/1.0 , a_4/1.0 , a_5/1.0\}\rangle$\\
				$C_8 = \langle\{b_3/0.5, b_4/0.5 \} , \{a_3/1.0 , a_5/1.0\}\rangle$\\
                $C_9 = \langle\{b_3/0.5, b_4/1.0 \} , \{a_3/0.5 , a_5/1.0\}\rangle$\\
                $C_{10} = \langle\{b_3/0.75 \} , \{a_3/1.0, a_4/1.0, a_5/0.5\}\rangle$\\
                $C_{11} = \langle\{b_3/0.75, b_4/0.5 \} , \{a_3/1.0 , a_5/0.5\}\rangle$\\
                $C_{12} = \langle\{b_3/1.0 \} , \{a_3/0.75, a_4/0.75, a_5/0.5\}\rangle$\\
                $C_{13} = \langle\{b_3/1.0, b_4/0.5 \} , \{a_3/0.75 , a_5/0.5\}\rangle$\\        
				$C_{14} =\langle\{b_3/1.0 , b_4/1.0\} , \{a_3/0.5 , a_5/0.5\}\rangle$\\
                $C_{15} = \langle\{b_1/1.0 , b_2/1.0, b_3/1.0, b_4/1.0\} , \{ \}\rangle$\\
			\end{tabular}
		\end{minipage}
		\begin{minipage}{0.25\textwidth}
		\begin{center}  
        \tikzstyle{place}=[ellipse, align=center,minimum width=30pt, draw=black!75,fill=white!20, text width= 17pt]
        \begin{tikzpicture}[inner sep=0.75mm,scale=1.0, every node/.style={scale=0.9}]			
            \node at (0,-0.5) (1) [place] {$C_1$};
            \node at (0.75,1) (2) [place] {$C_2$};
            \node at (0.75,2.5) (3) [place] {$C_3$};
            \node at (0.75,4) (4) [place] {$C_4$};
            \node at (2,2.5) (5) [place] {$C_5$};
            \node at (0.75,5.5) (6) [place] {$C_6$};
            \node at (-0.75,1) (7) [place] {$C_7$};
            \node at (-2.25,2.5) (8) [place] {$C_8$};
            \node at (-0.75,2.5) (9) [place] {$C_{10}$};
            \node at (-2.25,4) (10) [place] {$C_{11}$};
            \node at (-0.75,4) (11) [place] {$C_{12}$};
            \node at (-1.25,5.5) (12) [place] {$C_{13}$};
            \node at (-2.75,5.5) (13) [place] {$C_{9}$};
            \node at (-1.25,7) (14) [place] {$C_{14}$};
             \node at (0,8.5) (15) [place] {$C_{15}$};
        
            \draw [-] (1)--(2)--(3)--(4)--(6)--(15);
            \draw [-] (5.north) to[bend right=10] (6.south east);
            \draw [-] (8.north west) to[bend left=10] (13.south);
            \draw [-] (1) -- (7)--(9)--(11)--(12)--(14)--(15);
            \draw [-] (7) -- (8)--(10)--(12);
            \draw [-] (2)--(5);
            \draw [-] (9)--(10);
            \draw [-] (13)--(14);
        \end{tikzpicture}
        \end{center}
		\end{minipage}
		\caption{Concepts and the multi-adjoint concept lattice of the context $(A,B,R_{0.5},\sigma)$.}\label{cl:ex_pro2}
\end{figure}}

Notice that, from the concepts listed on the left side of Figure~\ref{cl:ex_pro2}, we can verify that intervals of concepts in the concept lattice can be determined from the pairs of $\cF\cC$ as Corollary~\ref{cor:topbot} states. One of them is delimited by concepts $C_2$ and $C_6$ obtained by $(\bigchi_{X'_1}, \bigchi_{Y'_1})$ and the other one by $C_7$ and $C_{14}$ obtained by $(\bigchi_{X'_2}, \bigchi_{Y'_2})$.

Therefore, with the threshold $\alpha=0.5$, we have obtained a new decomposable context that preserves a greater amount of information from the original context. \qed
\end{example}

\section{Conclusions and future work}\label{sec:conclu}

This paper has presented a mechanism for detecting and computing  
 independent subcontexts appearing in a context, extending the procedure given in the classical case based on the  necessity operators given in possibility theory~\cite{dubois:2012}. 
 
Moreover, several properties of these operators have been proved. For example, we have proved that each pair in $ \cF\cC$ provides the top and bottom concepts of the corresponding independent subcontext and that no other concept exists between them and the top and bottom concepts of the whole concept lattice of the original context.  In addition, we have proposed a procedure in order to decomposed a context when it is not possible to obtain independent subcontexts.
{This last mechanism can be seen as a procedure to compute ``approximate''  independent subcontexts or to avoid the consideration of low relations, which could be understood as non-representative, providing more noise than information}

In the future, more properties will be studied and new mechanisms to determine independent or pseudo independent parts in a dataset will be analyzed. 
{Furthermore, the possibility of improving the factorization process~\cite{BelohlavekTFS24} using the  decomposition procedure introduced in this paper will also be studied.}
These results will be applied to real datasets, such as the ones given by Grupo Energ\'etico de Puerto Real (Spain) on renewable energy (photovoltaic facilities)~\cite{Gen2022_CL,Gen2024Mod1} or the ones related to digital forensic within the COST Action DigForASP~\cite{SCI24Choquet}.

\section*{Declarations}

The authors have no competing interests to declare that are relevant to the content of this article.

\end{document}